\definecolor{darkblue}{rgb}{0.,0.,0.75}
\definecolor{darkred}{rgb}{0.5,0.,0.}
\definecolor{darkgreen}{rgb}{0.0,0.5,0.}
\newtheorem{lemma}{Lemma}[section]
\newtheorem{theorem}[lemma]{Theorem}
\newtheorem{corollary}[lemma]{Corollary}
\newtheorem{proposition}[lemma]{Proposition}
\newtheorem{proposition-definition}[lemma]{Proposition-Definition}
\theoremstyle{definition}
\def\CF {{\mathcal F}}
\def\CL {{\mathcal L}}
\def\h{{\mbox{H}}}
\def\L{{\mbox{L}}}
\def\hm{{\mbox{H}}^{-}}
\def\E{{\mbox{E}}}
\def\ifpd{\IF_p^{\dd}}
\def\IF{\mathbb{F}}
\def\IC{\mathbb{C}}
\def\IR{{\mathbb{R}}}
\def\IZ{{\mathbb{Z}}}
\def\la{\langle}
\def\ra{\rangle}
\def\gl{{\mbox{GL}}}
\def\egl{{\mbox{EGL}}}
\def\La{{\Lambda}}
\def\l{{\lambda}}
\def\lm{{\lambda}^{-}}
\def\sc{\scaleto{\,\circ\,}{4pt}}
\def\sl{\mathcal{S}_{\textsuperscript{L}}}
\def\sls{\mathcal{S}_{\textsuperscript{L}^*}}
\def\LL{\CL_{\scaleto{\L}{5pt}}}
\def\FL{\CF_{\scaleto{\L}{5pt}}}
\def\OLL{\Omega\CL_{\scaleto{\L}{5pt}}}
\def\mas{\mbox{Mas}}
\def\st{\mbox{st}}
\def\Det{\mbox{Det}}
\def\P{\mbox{P}}
\def\U{\mbox{U}}
\def\K{\mbox{K}}
\newcommand{\hp}{{\mathsf{U}}}
\newcommand{\ehp}{{\mathsf{EU}}}
\newcommand{\bass}{{\mathsf{B}}}
\newcommand{\switt}{{\mathfrak{W}}}
\newcommand{\umod}{{\mathfrak{U}}}
\newcommand{\vtheory}{{\mathfrak{L}}}
\newcommand{\dd}{{\mathsf{d}}} 
\begin{document}

\title[Homotopy Classification of Clifford]
{Homotopy Classification of loops of Clifford unitaries}
\author{Roman Geiko}
\address{NHETC and Department of Physics and Astronomy, Rutgers University}
\author{Yichen Hu}
\address{Department of Physics, Princeton University}

\begin{abstract}
Clifford quantum circuits are elementary invertible transformations of quantum systems that map Pauli operators to Pauli operators. We study periodic one-parameter families of  Clifford circuits, called loops of Clifford circuits, acting on $\dd$-dimensional lattices of prime $p$-dimensional qudits. We propose to use the notion of algebraic homotopy to identify topologically equivalent loops. We calculate homotopy classes of such loops for any odd $p$ and $\dd=0,1,2,3$, and $4$. Our main tool is the Hermitian K-theory, particularly a generalization of the Maslov index from symplectic geometry. We observe that the homotopy classes of loops of Clifford circuits in $(\dd+1)$-dimensions coincide with the quotient of the group of Clifford Quantum Cellular Automata modulo shallow circuits and lattice translations in $\dd$-dimensions.
\end{abstract}

\maketitle

\section{Introduction}\label{intro}

Unitary dynamics of many-body quantum systems constitute one of the core components of quantum dynamics in the theory of quantum computation and information processing. Recent works have made tremendous progress in understanding the entanglement generation \cite{Nahum2017, Nahum2018,Li2019, Fisher2023}, scrambling of information \cite{Mi2021}, and localization \cite{farshi2022mixing,farshi2023absence} by generic (random) circuits. Unitary dynamics, more general than finite layers of quantum circuits, can be found in unitaries preserving strict locality, also known as quantum cellular automata (QCA) \cite{schlingemann2008structure, Freedman2020}: these unitaries map local observables to local observables. QCA are intrinsically related to topological phases of matter and play an important role in the construction and classification thereof \cite{Hastings2013, Lukasz2020}. In fact, one can study topological phases of unitary dynamics on their own -- we say that two QCA belong to the same topological phase of unitary dynamics if they differ by a quantum circuit and/or lattice translation. Phases of Clifford unitary dynamics have been completely analyzed in \cite{Haah2022}. The main subject of our paper is the analysis of phases of periodic one-parameter families of unitary dynamics, loops of Clifford circuits, sometimes going by the name of Clifford Floquet circuits.

The notion of a topological phase of matter colloquially means a family of systems whose elements can be connected by adiabatic evolution. A mathematically rigorous definition of a topological phase involves sophisticated functional analysis and is an area of active research, see \cite{ogata2021classification, kapustin2022local,beaudry2023homotopical,chung2023topological} and references therein.  
In this paper, we use algebraic tools  to study homotopy groups (in a sense explained below) of loops of Clifford unitary dynamics bypassing functional analysis. We assume that the unitary dynamics is strictly locality-preserving, Clifford, and translation invariant. By strict locality-preserving dynamics, we mean operator algebra automorphisms that map operators with finite support to operators with finite support. Clifford dynamics stands for automorphisms of the operator algebra normalizing the group of Pauli operators. A natural way to study Clifford unitary dynamics is by analyzing its action on stabilizer Hamiltonians. As was demonstrated by J. Haah, stabilizer Hamiltonians admit a description in terms of certain submodules of the so-called Pauli module, a module over the ring of Laurent polynomials \cite{haah2013}, while Clifford QCA correspond to automorphisms of the Pauli module. We shall be interested in specifically simple Lagrangian submodules of the Pauli module. In the language of stabilizer Hamiltonians, the set of Lagrangian submodules, the Lagrangian Grassmannian, corresponds to a subset of invertible Hamiltonians, i.e., Hamiltonians having a unique ground state on a lattice of any topology.

We argue that the phases of Clifford unitary dynamics can be analyzed by studying the homotopy theory of the Lagrangian Grassmannian. In particular, we analyze topological phases of loops of Clifford circuits by studying periodic paths in the Lagrangian Grassmanian. The group structure on the set of loops of Clifford circuits is induced by stacking corresponding families of invertible Hamiltonians. This group law differs from the obvious group structure on the set of topological phases of unitary dynamics, which is induced by the composition of QCA. We analyze periodic paths in the Lagrangian Grassmannian using a generalization of the classical Maslov index, which is a homotopy invariant of loops in the real Lagrangian Grassmannian. The version of the Maslov index used in this paper is a manifestation of the fundamental theorem of Hermitian K-theory \cite{Karoubi1980} of Karoubi in the interpretation of Barges and Lannes \cite{BL}. Since the fundamental theorem of Hermitian K-theory is applicable only to the Laurent polynomial rings with coefficients in $\IF_p$ with $p>2$, our results  do not apply to lattices of qubits.

The main result of this paper is a classification of homotopy classes of loops of Clifford circuits for translationally invariant $\dd$-dimensional lattices of prime $p$-dimensional qudits with $\dd=0,1,2,3$, and $4$, and with $p$  greater than $2$, which we denote by $\Omega \mathfrak C(\dd,p)$:
\begin{align*}
\Omega \mathfrak C(0,p)\cong \Omega \mathfrak C(1,p) \cong \Omega \mathfrak C(2,p) \cong \Omega \mathfrak C(3,p) \cong 0\,,\\
    \Omega \mathfrak C(4,p) \cong \begin{cases}
        \IZ/2\IZ\oplus \IZ/2\IZ\,,\quad p\equiv 1 \;\mbox{mod}\; 4\,,\\
        \IZ/4\IZ\,,\quad \quad \quad \;\;\quad p\equiv 3 \;\mbox{mod}\; 4\,.\\
    \end{cases}
\end{align*}

This paper is organized as follows. In Section \ref{section:review} we give an extended introduction to the symplectic formalism for Clifford QCA, starting with zero-dimensional Clifford unitaries. For $\dd>0$, we introduce a Pauli algebra and Pauli modules such that Clifford QCA are given by automorphisms of the Pauli modules. In section \ref{sec:algebraicintro} we set the notation for modules equipped with (anti-) hermitian forms and module automorphisms, as well as review some  elementary facts about algebraic and Hermitian K-theory. In Section \ref{sec:pathsoflagrnagians} we introduce the notion of algebraic homotopies of Lagrangian submodules, discuss its basic properties and relate it to the elementary automorphisms of the Pauli module. In Section \ref{interlude} we pause for a discussion of how the algebraic homotopy equivalence between Lagrangian stabilizer modules corresponds to the CFDQC equivalence between the stabilizer Hamiltonians . Then, we review the classical Maslov index for loops of Lagrangians in a real phase space. In Section \ref{sec:loopsoflagrangians} we continue the analysis of loops of Lagrangians inside Pauli modules using a generalization of the Maslov index. In Section \ref{sec:fundideal} we review some known results about the Witt group of hermitian forms and compute relevant fundamental ideals. We make concluding remarks in Section \ref{sec:discussion}.

We use some classical results from Hermitian K-theory, like its homotopy invariance, without proofs. We provide proofs for less standard facts that constitute the computation of the homotopy classes of loops of Lagrangians. Many important results have been reviewed in \cite{Haah2022}, where we inherited our system of notations from.

{\bf Acknowledgments.}
We thank P. Balmer, J. Fasel, P. Orson, and W. Pitsch for correspondence and M. Levin and Y. Liu for discussions. We thank J. Haah and T. Mainiero for comments on the draft. R.G. is grateful to C. Weibel for a  hint and to G. Shuklin for many valuable discussions. R.G. also thanks the organizers of the summer school ``Mathematics of Topological Phases of Matter" where he learned about QCA. The work of R.G. was supported by the US Department of Energy under grant DE-SC0010008. Y.H. would like to thank support from 
NSF through the Princeton University’s Materials Research Science and Engineering Center DMR-2011750 and the Gordon and Betty Moore Foundation through Grant
GBMF8685 towards the Princeton theory program.

 \section{Review of symplectic formalism}\label{section:review}
Throughout this letter, we assume that the lattice systems under consideration are translation-invariant and unitaries acting on lattice systems are translation invariant as well.  We focus on Hamiltonians made of products of Pauli operators and unitaries mapping products of Pauli operators in products of Pauli operators, also known as Clifford unitaries. In this section we review the formalism of symplectic vector spaces and symplectic modules as a convenient framework for translation-invariant Pauli Hamiltonians and Clifford unitaries.

{\bf Single qudit.} We begin our review with the case of a single site. The Pauli operators acting on a prime $p$-dimensional qudit form the Pauli group, which we denote by $\P_p$. This group is defined through generators and relations: 
\begin{align}\label{Def:CliffGroup}
   \P_p\coloneqq \la X, Z ,\omega \;|\; X^p=Z^p=\omega^p=1,\; XZ=\omega ZX\ra \,.
\end{align}
Let us pick an orthonormal basis $\{|n\ra\}_{n=0}^{p-1}$ for $\IC^{p}$, then the minimal faithful unitary representation of $\P_p$, i.e., a homomorphism into the unitary group of $\IC^{p}$, denoted by $\U_p$, is given by
\begin{align*}
    \omega |n\ra =e^{2\pi i/p} |n\ra \,,\quad X|n\ra =|n+\scaleto{1}{6pt} \,\mbox{mod}\;p\ra\,,\quad Z |n\ra =e^{2\pi i n/p}|n\ra\,.
\end{align*}
In this paper we are concerned with automorphisms of algebras of operators acting on lattice systems. For the case of a single qudit, all automorphisms of the algebra of observables are given by conjugations with some unitary operator. The group of Clifford unitaries $\mbox{CU}_p$ is a subgroup of the unitary group that normalizes $\mbox{P}_p$ (more accurately, the image of $\mbox{P}_p$ in $\U_p$):
\begin{align*} 
\mbox{CU}_p\coloneqq\{x\in \mbox{U}_p\,|\,x \,\mbox{P}_p\,x^{-1}=\mbox{P}_p\}\,.
\end{align*}
In other words, Clifford unitaries generate the automorphisms of the qudit operator algebra that preserve the Pauli group. 

 Let us recall that the adjoint action of the unitary group is not faithful -- the kernel of this action is the center of $\U_p$ and it is isomorphic to $\U_1$. In its turn, the adjoint action of the quotient group $\mbox{U}_d/\mbox{U}_1$, also known as the projective unitary group $\mbox{PU}_d$, on the matrix algebra is faithful. In order to get rid of this redundancy, we define the projective Clifford group $\mbox{PCU}_p$:
\begin{align*} 
\mbox{PCU}_p\coloneqq\{x\in \mbox{PU}_p\,|\,x \,\mbox{P}_p\,x^{-1}=\mbox{P}_p\}\,.
\end{align*}
whose adjoint action on $\P_p$ is faithful. There is an alternative description
of $\mbox{PCU}_d$, which will be important for future applications, as the normalizer of the abelianization of $\mbox{P}_d$ within $\mbox{PU}_p$. Let $\mbox{AP}_p\coloneqq \P_p/[\mbox{P}_p,\mbox{P}_p]\cong \IF_p\oplus\IF_p$\footnote{Throughout the paper $\IF_p$ stands for a prime field, though we do not use the field structure and we could equivalently use the ring $\IZ/p\IZ$.} be the abelianization of $\P_p$ which is obtained by projecting out the powers of $\omega$ such that the elements of $\mbox{PCU}_p$ are given by powers of $\tilde X$ and $\tilde Z$ such that $\tilde X^p=1$, $\tilde Z^p=1$, and $\tilde X \tilde Z=\tilde Z \tilde X$. Thus, the alternative definition of  $\mbox{PCU}_p$ is 
\begin{align*} 
\mbox{PCU}_p=\{x\in \mbox{PU}_p\,|\,x \,\mbox{AP}_p\,x^{-1}=\mbox{AP}_p\}\,.
\end{align*}
Later on, we shall be analyzing the action of Clifford unitaries on the abelian group $\mbox{AP}_p$. Clearly, $\mbox{AP}_p$ is a subset of $\mbox{PCU}_p$ and its adjoint action on itself is trivial. Therefore, we define the {\bf quotient Clifford group} whose adjoint action on $\mbox{AP}_p$ is faithful:
\begin{align}
    \mbox{C}_p\coloneqq \mbox{PCU}_p/\mbox{AP}_p\,.
\end{align}
A crucial fact about the quotient Clifford group is that it is isomorphic to the group $\mbox{Sp}(2;\IF_p)$ of invertible transformations of the two-dimensional $\IF_p$-vector space which preserve the standard symplectic form $\l=\scaleto{\begin{pmatrix}
    0 & 1\\
    -1 & 0
\end{pmatrix}}{16pt}$ \cite{kitaev2002classical, Havlicek2002}. Any element of $\mbox{Sp}(2;\IF_p)$ corresponds to an automorphism of the operator algebra acting on a qudit that maps a Pauli operator to a Pauli operator with no fixed points. It is instructive to describe the action of $\mbox{C}_p$ on $\mbox{AP}_p$ explicitly: it is simply the regular representation of $\mbox{Sp}(2;\IF_p)$ on $\IF_p\oplus \IF_p$:
\begin{align}
        \mbox{C}_p: \mbox{AP}_p\to \mbox{AP}_p\,,\quad  \begin{pmatrix}
            x\\
            z 
        \end{pmatrix}\mapsto \Phi \cdot \begin{pmatrix}
            x\\
            z 
        \end{pmatrix}
\end{align}
where $\begin{pmatrix}
    x\\ z
\end{pmatrix}\in  \IF_p\oplus \IF_p \cong \mbox{AP}_p$ and $\Phi\in \mbox{Sp}(2;\IF_p) \cong \mbox{C}_p$. 

This result admits a simple interpretation. Let $y_1$ and $y_2$ be a pair of elements from $\mbox{AP}_p$, which correspond to Pauli operators up to phases, then the commutator of the corresponding operators within $\mbox{P}_p$ is given by $\omega^{\l(y_1,y_2)}$, where $\omega$ is the $p$-th root of unity. Thus, we trade $\mbox{P}_p$ for the abelian group $\mbox{AP}_p$ equipped with the symplectic form, while the action of the quotient Clifford group is an automorphism of $\mbox{AP}_p$ preserving this symplectic form. 

\textbf{$N$ qudits.} We treat $N$ copies of $p$-dimensional qudits in the complete analogy with one qudit: the Pauli group $\P_p^N$ consists of $N$ types of species of $X$ and $Z$ such that different species commute. The abelianization of the Pauli group in this case  $\mbox{AP}_p^N\cong \IF_p^N\oplus \IF_p^N$, is a $2N$-dimensional $\IF_p$-vector space, while the quotient Clifford group $\mbox{C}_p^N$ is isomorphic to $\mbox{Sp}(2N;\IF_p)$, the group of invertible transformations preserving the standard symplectic form  $\l_N=\scaleto{\begin{pmatrix}
    0 & 1_N\\
    -1_N & 0
\end{pmatrix}}{16pt}$. 

\textbf{$\dd>0$ dimensions.} Luckily, the symplectic formalism can be extended to translationally invariant many-body systems in any number of spatial dimensions. Let us consider a cubic lattice $\IZ^{\dd}$ with $N$ copies of $p$-dimensional qubits attached to each lattice site. In other words, each site $x\in \IZ^\dd$  supports the Pauli group of $N$ qudits $\P_p^N$. We notice that $\IZ^{\dd}$ acts on itself by translations. Moreover, the group ring $\IF_p(\IZ^{\dd})$ acts on the set of Pauli operators. For example, let us consider a one dimensional lattice with a single qudit per site. Invertible elements of $\IF_p(\IZ)$ are given by monomials of the form $a\,t$ with some $a\in \IF_p^{\times}$ and $t\in \IZ$; let $X_s$ be a Pauli operator supported at site $s\in \IZ$, then $a\,t$ maps it to the Pauli operator $(X_{s+t})^a$. Note that in order to preserve the commutation relations, $a\,t$ must map $Z_s$ to $Z^{a^{-1}}_{s+t}$. The case of $a=1$ corresponds to pure translations, while transformations with $a\neq 1$ we shall call generalized translations. 

Let us notice that the group ring $\IF_p(\IZ^\dd)$ is isomorphic to the ring of Laurent polynomials with coefficients in $\IF_p$:
\begin{align}
\IF_p(\IZ^\dd)\cong \IF_p^{\dd}\coloneqq \IF_p[x_1,x_1^{-1},\dots, x_\dd,x_{\scaleto{\dd}{5pt}}^{-1}]\coloneqq \{ \sum_{i_1,\dots, i_\dd}x_1^{i_1}\dots x_{{\scaleto{\dd}{5pt}}}^{i_{\scaleto{\dd}{3.5pt}}}a_{i_1,\dots, i_{{\scaleto{\dd}{5pt}}}} \;|\;a_{i_1,\dots, i_{{\scaleto{\dd}{5pt}}}}\in \IF_p\}\; .  
\end{align}
Later on we shall be using the polynomial notation. The ring $\IF_p^{\dd}$ is called the {\bf Pauli algebra}. Further, we notice that up to phases, the Pauli operators for $N$ qudits  form a group isomorphic to $(\ifpd)^{2N}$. Clearly, the Pauli algebra $\ifpd$ acts on $(\ifpd)^{2N}$ and we call the latter the {\bf Pauli module}. This is a straightforward generalization of our previous observation that abelianization of the Pauli group on a single site is isomorphic to $(\IF_p)^{2N}$. 

The Pauli algebra, as a group ring, has an innate involution corresponding to the reflection around the origin in $\IZ^\dd$. In the language of Laurent polynomials, the involution maps $x_1\to x_1^{-1},\dots , x_\dd\to x_\dd^{-1}$. The involution allows us to define an $\ifpd$-valued anti-hermitian form on the Pauli module:
\begin{multline}\label{def:pairing}
  \la  \sum_{i_1,\dots, i_\dd}x_1^{i_1}\dots x_{{\scaleto{\dd}{5pt}}}^{i_{\scaleto{\dd}{3.5pt}}}y_{i_1,\dots, i_{{\scaleto{\dd}{5pt}}}},  \sum_{j_1,\dots, j_\dd}x_1^{j_1}\dots x_{{\scaleto{\dd}{5pt}}}^{j_{\scaleto{\dd}{3.5pt}}}y'_{j_1,\dots, j_{{\scaleto{\dd}{5pt}}}}\ra =\\  \sum_{i_1,\dots, i_\dd}\sum_{j_1,\dots, j_\dd}x_1^{-i_1}\dots x_{{\scaleto{\dd}{5pt}}}^{-i_{\scaleto{\dd}{3.5pt}}}x_1^{j_1}\dots x_{{\scaleto{\dd}{5pt}}}^{j_{\scaleto{\dd}{3.5pt}}}\; \l_N(y_{i_1,\dots, i_{{\scaleto{\dd}{5pt}}}},y'_{j_1,\dots, j_{{\scaleto{\dd}{5pt}}}})\,,
\end{multline}
where $y_{i_1,\dots, i_{{\scaleto{\dd}{5pt}}}}$ and $y'_{j_1,\dots, j_{{\scaleto{\dd}{5pt}}}}$ are elements of $\IF_p^{2N}$.

As a group ring, $\ifpd$ comes with augmentation, a map which picks the coefficient of zero degree:
\begin{align}
    \varepsilon:\ifpd \to \IF_p\,,\quad &\varepsilon\big\{ \sum_{i_1,\dots, i_\dd}x_1^{i_1}\dots x_{{\scaleto{\dd}{5pt}}}^{i_{\scaleto{\dd}{3.5pt}}}a_{i_1,\dots, i_{{\scaleto{\dd}{5pt}}}}\big\} =a_{0,\dots, 0}\,.
\end{align}
To this end, we define an $\IF_p$-valued anti-hermitian form by composing \eqref{def:pairing} and $\varepsilon$. Here and everywhere below we assume that the Pauli module is equipped with the $\ifpd$-valued anti-hermitian form \eqref{def:pairing}. Let, for example, $y_1$ and $y_2$ be a pair of elements of the Pauli module, which correspond to Pauli operators up to phases: the commutator of the corresponding Pauli operators is given by $e^{\frac{2\pi i}{p}\varepsilon\{\la y_1,y_2 \ra\}}$. Therefore, we can replace the set of Pauli operators on the lattice with the Pauli module equipped with the standard anti-hermitian form.

\textbf{Basis-free notation.} Later we shall abstract away the concrete form of the Pauli module disguising it as an $\ifpd$-module of the form $\L\oplus \L^*$ where $\L$ is a $\ifpd$-module free module and $\L^*$ is its dual. The direct summand $\L$ should be seen as a module containing all species of the Pauli $X$ operators, while $\L^*$ contains all the Pauli $Z$ operators. Note that we need the dual module in order to have a pairing between Pauli $X$ and Pauli $Z$ operators, similarly to the conjugate variables in Hamiltonian mechanics. We have seen above that Pauli $X$ and Pauli $Z$ operators transform differently under generalized translations. Precisely, the group of automorphisms of $\L$ acts on $\L^{*}$ as on a dual module.

\textbf{Automorphisms of Pauli modules.} The Pauli module is a free module equipped with the standard anti-hermitian form \eqref{def:pairing}. The automorphism of the Pauli module is an invertible transformation that preserves the anti-hermitian form -- we call the group of automorphisms of the Pauili module the group of $\lm$-unitaries. It turns out the action of the quotient Clifford group on Pauli operators in the many-body case is isomorphic to the action of the group of $\lm$-unitaries on the Pauli module, see \cite{schlingemann2008structure}. Note that we abstain from using the term ``symplectic" in the many-body case, as we keep track of the involution on the Pauli algebra.  

An implicit feature of the formalism of $\lm$-unitaries is that any $\lm$-unitary has a property that it maps operators localized in a finite region to operators localized in a finite region. This property follows from the fact that we are operating with the ring of Laurent \textit{polynomials}: degree of a polynomial roughly corresponds to the radius of action of such a polynomial on a Pauli operator. 

\subsection{Stabilizer modules}This paper is devoted to the analysis of $\lm$-unitaries and it is convenient to study them by their action on submodules of Pauli modules, more precisely, on Lagrangian submodules. Thus, we briefly digress on the so-called stabilzer modules following \cite{haah2013,haah2016}.

Let us consider a translation invariant Hamiltonian 
\begin{equation}\label{Hamiltonian}
    H=-\sum_{t\in \mathbb{Z}^{\dd}}(h_{1,t}+\cdots+h_{v,t})
\end{equation}
where $h_{1,t},\dots, h_{v,t}$ are terms corresponding to the interaction types that satisfy a number of properties: we assume that the terms $h_{1,t},\dots, h_{v,t}$ are made of Pauli operators, mutually commute, and $H$ is frustration-free -- such Hamiltonians are called {\bf stabilizer Hamiltonians}. Since the Hamiltonian in \eqref{Hamiltonian} is obtained by all possible translations of these $v$ terms, we can view these terms as generators for an $\ifpd$-module. The module generated by $h_{1,t},\dots, h_{v,t}$ is called {\bf stabilizer module}.

\paragraph{{\bf Example}}\label{example}As an example, let us consider a one-dimensional spin-chain with a single $p$-dimensional qudit per site. Here, the Pauli algebra is $\IF_p^1=\IF_p[x,x^{-1}]$ and the Pauli module is two-dimensional. Let $H$ be a generalized $\IZ/d\IZ$ cluster Hamiltonian:
\begin{align}\label{cluster}
    H=\sum_{x\in \IZ}X_{x-1}Z_x X_{x+1}+h.c.
\end{align}
The stabilizer module for such a Hamiltonian is generated by $\begin{pmatrix}x^{-1}+x^{+1}\\
1\end{pmatrix}$ and $\begin{pmatrix}-x^{-1}-x^{+1}\\
-1\end{pmatrix}$, which is, in fact, one-dimensional.

The commutation relations between Pauli operators are encoded by the pairing \eqref{def:pairing}. By definition, all elements of the stabilizer module mutually commute what means that the stabilizer module is {\bf isotropic}, i.e., pairing \eqref{def:pairing} degenerates on such a submodule. Further, we can ask -- what Pauli operators commute  with the terms in $H$, i.e., what are the local symmetries admitted by $H$? Let us denote the stabilizer module by $S$, then the operators commuting with $S$ form the module $S^{\perp}$ where orthogonal complement is taken with respect to the pairing \eqref{def:pairing} within the Pauli module. The isotropy condition $S\subset S^{\perp}$ is implied by the definition of the stabilizer module, while the absence of local symmetries is equivalent to the condition $S^{\perp}\subset S$, i.e., $S$ is {\bf coisotropic}. Any Hamiltonian corresponding to a coisotropic stabilizer module is said to satisfy the {\bf local topological order} condition. There is a special class of locally topologically ordered Hamiltonians having a unique ground state on any topology -- we call such Hamiltonians invertible. Even more restricted is the class of Hamiltonians corresponding to Lagrangian stabilizer modules defined in Section \ref{sec:lagrangians}: they are isotropic, coisotropic and direct summands in the Pauli module. One checks that the module from the example above is indeed a Lagrangian. We recommend \cite{haah2013} and \cite{haah2016} for further reading on how properties of stabilizer modules translate into physical properties of Hamiltonians. The upshot of this discussion is that Lagrangian submodules of the Pauli module correspond to invertible Hamiltonians, though not all invertible systems can be encoded by a Lagrangian submodule. 

\subsection{Clifford QCA}  A quantum cellular automaton (QCA) is a locality-preserving automorphism of the observables algebra: it maps operators localized in finite regions to operators localized in finite regions. Clifford QCA (CQCA) are QCA which map product of Pauli operators to products of Pauli operators. According to the discussion above, we identify the groups of Clifford QCA and the group of $\lm$-unitaries.

The most elementary CQCA are given by conjugations with Clifford gates -- local unitary transformations whose adjoint action maps Pauli operators to Pauli operators. An operator evolution generated by mutually commuting Clifford gates is called Clifford quantum circuit (CQC). A finite number of layers of CQCs is called Clifford finite-depth (shallow) quantum circuit (CFDQC), which is a prototypical example of a CQCA. While the group of CQCA correspond to the group of $\lm$-unitaries, CFDQCs correspond to the normal subgroup of elementary $\lm$-unitaries, see, e.g., \cite{haah2013}. 

\paragraph{{\bf Example.}} In the setting of the previous example, let us provide an elementary $\lm$-unitary that maps the product state Hamiltonian $H_0=\sum_{x\in \IZ} X_x+h.c.$, which corresponds to a stabilizer module generated by $\begin{pmatrix}1 \\
    0 
\end{pmatrix}$, to the cluster Hamiltonian \eqref{cluster}:
\begin{align*}
    \begin{pmatrix}1 & x^{-1}+x^{+1}\\
    0 & 1 
\end{pmatrix}\,.
\end{align*} This example demonstrates that there exists a shallow Clifford circuit preparing the cluster state from the product state.

Together with shallow circuits, there is another important class of CQCA which is given by generalized translations (we shall equally use the term ``shifts''). A translation along one of the lattice directions is obviously a CQCA, yet it cannot be presented by a finite number of layers of Clifford circuits. Indeed, if we try to do so, then we either loose the locality, or we end up with an infinite number of layers. Together with the ordinary translations, we are interested in generalized translations -- $\lm$-unitaries defined by a property that they do not mix the generalized Pauli $X$ and Pauli $Z$ operators. In other words, a generalized translation consists of a simultaneous translation and a mixing of species of qudits.  Generalized translations correspond to the subgroup of ``hyperbolic'' $\lm$-unitaries defined in \eqref{def:hyperbolicmap}.

\section{Algebraic preliminaries}\label{sec:algebraicintro}
In this section we give necessary definitions, set notation and conventions, and review some relevant theorems.

\subsection{Notations and conventions}
Throughout this paper, by $R$ we shall mean a regular commutative ring with involution where $2$ is invertible. The involution of $a\in R$ is denoted by $\bar a$. Here and throughout the paper we shall assume without mention that $R$-modules are finitely-generated projective (FGP). For an $R$-module $\L$, we define its dual as $\L^*\coloneqq \mbox{Hom}_R(\L,R)$ equipped with the following module structure: 
\begin{align*}
    R\times \L^*\to \L^*\,,\quad (a, Z)\mapsto \bar a \,Z\,.
\end{align*} 
The canonical pairing between $\L$ and $\L^*$ is the simple evaluation:
\begin{align*}
   \L\times \L^*\to R\,,\quad (X, Z)\mapsto Z(X)\,.
\end{align*} 

We adopt matrix notation for maps defined on direct sums of modules. For example, a module map $\L\oplus \mbox{M}\to \mbox{N}\oplus \mbox{Q}$ comprises of four maps: $\alpha\in \mbox{Hom}_R(\mbox{L}, \mbox{N})$, $\beta\in \mbox{Hom}_R(\mbox{M}, \mbox{N})$, $\gamma\in \mbox{Hom}_R(\mbox{L}, \mbox{Q})$, and $\delta\in \mbox{Hom}_R(\mbox{M}, \mbox{Q})$, which we denote by 
$\scaleto{
\begin{pmatrix} \alpha & \beta \\ \gamma & \delta \end{pmatrix}}{20pt}$. The dual to $\alpha\in \mbox{Hom}_R(\mbox{L}, \mbox{N})$ is the map $\alpha^*\in \mbox{Hom}_R(\mbox{L}^*, \mbox{N}^*)$ defined by
\begin{align*}
    \alpha^*(g)(X)=g(\alpha (X))\,\;\; \mbox{for any } X\!\in\! \L\, \mbox{ and }\, g\!\in\! \mbox{N}^*\,.
\end{align*} 
Similarly, the dual to $\scaleto{
\begin{pmatrix} \alpha & \beta \\ \gamma & \delta \end{pmatrix}}{20pt}$ is the map $\scaleto{
\begin{pmatrix} \alpha^* & \gamma^* \\ \beta^* & \delta^* \end{pmatrix}}{20pt}$. 

Note that the double dual of any FGP module is canonically isomorphic to itself: we use this canonical isomorphism to identify $\L^{**}$ and $\L$. A $\pm$hermitian form on an FGP $R$-module $\L$ is an element $\phi\in \mbox{Hom}_R(\L,\L^*)$ such that $\phi^*=\pm \phi$. The more familiar notion of a $\pm$hermitian form as a pairing corresponds to our definition through the following identification. A $\pm$hermitian form $\phi$ corresponds to the pairing $\tilde{\phi}$ via 
$\tilde \phi(X_1,X_2)=\phi (X_1)(X_2)$ such that $\tilde \phi(X_1,X_2)=\pm\overline{\tilde \phi(X_2,X_1)}$ and $\tilde \phi(X_1,aX_2)= a\tilde\phi(X_1,X_2)$. We denote the { \bf space of $+$hermitian forms} on $\L$ (on $\L^*$) by $\sl$ (by $\sls$). We say that a $\pm$hermitian form $\phi$ is non-degenerate if it is an isomorphism of $R$-modules.

\subsection{\texorpdfstring{$\l^{\pm}$}{Lg}-unitaries}\label{sec:unitaries}
For an FGP $R$-module $\L$, we define two modules $\h^{\pm}(\L)=\L \oplus \L^*$ equipped with the $\pm$hermitian form:
\begin{align}
    \lambda^{\pm}=
    \begin{pmatrix} 0 & 1 \\ \pm1 & 0 \end{pmatrix}\,,
\end{align}
such that the associated pairing is given by 
\begin{align*}
    \tilde \lambda^{\pm}((X_1,Z_1),(X_2,Z_2))=Z_1(X_2)\pm Z_2(X_1)\,.
\end{align*}
The group of module automorphisms of $\L\oplus \L^*$ preserving $\l ^{\pm}$ is called the group of {\bf $\lambda^{\pm}$-unitaries} and denoted by $\hp^{\pm}(\L;R)$. In the case of $\lm$-unitaries, there is a special subgroup generated by the following automorphisms:
\begin{align}\label{def:elun}
    \E_0(q_0)=\begin{pmatrix} 1 & 0 \\ q_0 & 1 \end{pmatrix}\,,\quad     \E_1(q_1)=\begin{pmatrix} 1 & q_1 \\ 0 & 1 \end{pmatrix}
\end{align} 
where $q_0$ is a $+$hermitian form on $\L$ and $q_1$ is a $+$hermitian form on $\L^*$. Such $\l^{-}$-unitaries are called elementary; they generate the group of {\bf elementary $\l^{-}$-unitaries}, denoted by $\ehp^{-}(\L;R)$\footnote{Note, our definition of elementary $\l^{\pm}$-unitaries differs from that of \cite{Haah2022}.}. The group of elementary $\l^{+}$-unitaries is generated by the transformations of the form \eqref{def:elun} with $q_0$ and $q_1$ being anti-hermitian forms and a new class of automorphisms of the form $\scaleto{\begin{pmatrix}
    0 & 1\\
    1 & 0
\end{pmatrix}}{15pt}$.

Another important subgroup of $\hp^{-}(\L;R)$ comes from automorphisms of $\L$, and is defined through the ``hyperbolic" group homomorphism:
\begin{align}
\label{def:hyperbolicmap}
    \h^{\pm}: \gl(\L;R)&\to \hp^{\pm}(\L;R)\,,\\
    a&\mapsto \begin{pmatrix} a & 0 \\ 0 & (a^{*})^{-1} \end{pmatrix}\,.
\end{align}

\subsection{Hermitian K-theory}
Let us give a minimal review of some facts from hermitian K-theory. We recommend \cite{KBook} as the standard reference on algebraic $\K$-theory.  Let $\mbox{P}(R)$ be the monoid of isomorphism classes of FGP $R$-modules with the direct sum as the binary operation. The ordinary $\K_0(R)$ is the abelian group which is obtained as the Grothendieck completion of $\mbox{P}(R)$. If $f$ is a ring homomorphism, then $\K_0(f)$ is a homomorphism of abelian groups induced by the extension of scalars.
In other words, $\K_0$ is a functor from the category of rings to the category of abelian groups.

In order to define $\K_1$, we first introduce the stable general linear group of $R$. There is an inclusion of groups:
\begin{align*}
     \gl(R^n;R)\to \gl(R^{n+1};R)\,,\quad u\mapsto u\oplus 1\,.
\end{align*}
The direct limit of these inclusions is called the stable general linear group of $R$ and denoted by $\gl(R)$. We define 
\begin{align*}
    \K_1(R)\coloneqq\gl(R)/[\gl(R),\gl(R)]
\end{align*}
where $[\gl(R),\gl(R)]$ is the subgroup of $\gl(R)$ formed by all possible commutators. Therefore, $\K_1(R)$ is an abelian group and comprises the functor $\K_1$. The Whitehead lemma provides an alternative description of $\K_1(R)$ as the quotient group $\gl(R)/\mbox{EGL}(R)$ where $\mbox{EGL}(R)$ is the stable group of elementary automorphisms. A useful property of $\K_1(R)$ that we shall use frequently is that, for any free $R$-module $\L$, the class of $a\in \gl(\L;R)$ in $\K_1(R)$ does not depend on the isomorphism $\L\cong R^{n}$ for some $n$, see Lemma 1.6 in \cite{KBook}.

While the algebraic K-theory of a ring concerns with FGP modules and automorphism groups of such modules, a hermitian K-theory of a ring with involution concerns with FGP modules equipped with a non-degenerate $\pm$hermitian form and module automorphisms preserving the form. We recommend \cite{KaroubiPeriodicity} as a general reference for the hermitian K-theory. Instead of the monoid of FGP modules $P(R)$, we now consider the monoid of FGP modules equipped with a non-degenerate $\pm$hermitian form, which we denote by $Q^{\pm}(R)$. The hermitian version of $\K_0(R)$ is the Grothendieck-Witt group of $R$, denoted by $GW^{\pm}(R)$, which is the Grothendieck completion of $Q^{\pm}(R)$. In the complete analogy with the stable group $\gl(R)$ we define stable $\l^{\pm}$-unitary groups, denoted by $\hp^{\pm}(R)$, and stable elementary $\l^{\pm}$-unitary groups, denoted by $\ehp^{\pm}(R)$. The analog of $\K_1(R)$ in the $\pm$hermitian case is the abelian group $ \hp^{\pm}(R)/\ehp^{\pm}(R)$.

The map $\L\to \h^{\pm}(\L)$ induces a homomorphism of groups $\K_0(R)\to GW^{\pm}(R)$, while map \eqref{def:hyperbolicmap} induces the homomorphism $\gl(R)\to \hp^{\pm}(R)$. Using the former hyperbolic map, we define the Witt group of non-degenerate $\pm$hermitian forms as follows:
\begin{align}\label{def:Wittgroup}
    \switt^{\pm}(R)\coloneqq\mbox{coker}(\h^{\pm}:\K_0(R)\to GW^{\pm}(R))\,.
\end{align}

 We shall be interested in the group of $\l^{\pm}$-unitaries modulo elementary and hyperbolic ones, which we denote by $\umod^{\pm}(R)$:
\begin{align}\label{def:UnitariesModElemModHyperb}
\umod^{\pm}(R)\coloneqq \hp^{\pm}(R)/\ehp^{\pm}(R)\;\h^{\pm}(\gl(R))\,.
\end{align}
This group is abelian, see Proposition 3.12 of \cite{Haah2022} for a proof, and $\umod^{\pm}$ is a functor from the category of rings to the category of abelian groups. 

\subsection{Polynomial extensions} For a given ring $R$, there are two extensions that will be relevant below: a polynomial extension $R[T]$ and Laurent polynomial extension $R[T,T^{-1}]$. Recall, $R[T]$ is the ring of polynomials in $T$ with coefficients in $R$ and where involution acts trivially on $T$:
\begin{align}
    \overline{{\sum_{n=0}^{N}}a_n\,T^{n}}=\sum_{n=0}^{N}\bar {a}_n\,T^{n}\,.
\end{align}
The $R[T,T^{-1}]$ is the ring of Laurent polynomials in $T$ with coefficients in $R$ and where involution acts on $T$ by inversion
\begin{align}
    \overline{{\sum_{n=-N}^{N}}a_n\,T^{n}}=\sum_{n=-N}^{N}\bar {a}_n\,T^{-n}\,.
\end{align}
There are ring homomorphisms given by evaluation at $1$:  
\begin{align*}
    ev_1:R[T]\to R\,,\quad ev_1:R[T,T^{-1}]\to R
\end{align*}
 while $R[T]$ also admits evaluation at $0$. As functors, $\K_0$ and $\K_1$ associate group homomorphisms to these ring homomorphisms -- this is a subject of fundamental theorems of (hermitian) K-theory. A fundamental theorem for $\K_0$- and $\K_1$-functors, in particular, states that for any regular ring $R$ there are isomorphisms $\K_0(R[T])\cong \K_0(R)$ and $\K_1(R[T])\cong \K_1(R)$, see Chapter II and Chapter III of \cite{KBook} respectively -- these statements are also known as homotopy-invariance theorems. The homotopy invariance of $GW^{\pm}(R)$ for a regular $R$ is Theorem 1.2 of \cite{KaroubiPeriodicity}. The homotopy invariance of $\umod^{\pm}(R)$ for a regular $R$ with $2$ invertible follows from Corollary 0.8 of \cite{KaroubiPeriodicity} and \cite{Karoubi1980}.\footnote{The homotopy invariance of $\umod^{-}(R)$ for regular rings with trivial involution and where $2$ is invertible is proven in Appendix D of \cite{BL}.} 

\subsection{Lagrangians}\label{sec:lagrangians}
Let $\La$ be a submodule of $\h^{-}(\L)$: it is called a {\bf Lagrangian} if it is:

1) isotropic: $\tilde{\lambda}^{-}|_{\scaleto{\La}{5pt}}=0$,

2) coisotropic: $\La^{\perp}\subset\La$, where $\La^{\perp}$ is the orthogonal complement to $\L$ with respect to $\tilde \l^{-}$,

3) direct summand in $\h^{-}(\L)$.

There is the standard Lagrangian in $\hm(\L)$, the submodule of the form $(\L,0)$, for brevity, we shall simply denote it by $\L\subset \hm(\L)$. The {\bf set of Lagrangians} in $\h^{-}(\L)$ is denoted by $\LL(R)$ -- this is a pointed set with the distinguished point $\L$. In what follows, we will be interested in the set of Lagrangians up to stabilization. Let us define the {\bf Lagrangian Grassmannian} $\CL(R)$ as a direct limit of inclusions:
\begin{align*}
    \CL_{R^n}(R)&\to \CL_{R^{n+1}}(R)\,\\
    \La&\mapsto \La\oplus R\,
\end{align*}
Thus defined $\CL(R)$ is a pointed set with the structure of an abelian monoid with the direct sum as the binary operation. 

Below we shall analyze the orbits of the group of elementary $\lm$-unitaries. Elementary $\lm$-unitaries are closely related to the notion of transversality of Lagrangians. Let $\La_1,\La_2\in \LL(R)$, we say that these Lagrangians are transversal if their set-theoretic union is the whole $\hm(\L)$. We denote this relation by $\pitchfork$:
\begin{align*}
    \La_1\pitchfork \La_2 \Longleftrightarrow\La_1 +\La_2 = \hm(\L)\,.
\end{align*}
Obviously, $\L\pitchfork \L^*$. It is easy to see that any Lagrangian transverse to $\L^*$ is a graph of some $+$hermitian form $q_0$ on $\L$:
\begin{align}
    \La \pitchfork \L^{*}\Longleftrightarrow \La=\begin{pmatrix}
    1 & 0\\
    q_0 & 1\end{pmatrix} \cdot \L\,.
\end{align}
and, similarly, any Lagrangian transverse to $\L$ is a graph of some $+$hermitian form $q_1$ on $\L^*$:
\begin{align}
\label{L*transvesal}
    \La \pitchfork \L\Longleftrightarrow \La=\begin{pmatrix}
    1 & q_1\\
    0 & 1\end{pmatrix} \cdot \L^*\,.
\end{align}
Further, we shall use the following elementary observation:
\begin{align}
\begin{pmatrix}
    1 & 0\\
    q_0 & 1\end{pmatrix} \cdot \L\pitchfork \L \Longleftrightarrow q_0 \,\mbox{\;is non-degenerate}\,,
\end{align}
and analogous statement for $\L^*$.

\textbf{Serre-Suslin-Swan theorem.} We aim to apply our results to the rings of Laurent polynomials with coefficients in a prime field $\IF_p$ for some $p>2$. For such rings a powerful theorem by Serre-Suslin-Swan holds \cite{Suslin1977,Swan1978}. This theorem states that all finitely-generated projective modules over $\ifpd$ are free. In other words, $\K_0(\ifpd)\cong \IZ$. One might also relate this theorem with the Quillen-Suslin theorem in algebraic geometry. Any finitely-generated module over $\IR$ is isomorphic to $R^{n}$ for some natural number $n$. 
Let us also notice that since any Lagrangian module  $\La\in \LL(\ifpd)$ is free, there always exists a $\lm$-unitary $\Phi$ such that $\La=\Phi \cdot \L$. This is a well-known theorem, see, e.g., Proposition 3.10 of \cite{Haah2022}.

\section{Paths of Lagrangians}\label{sec:pathsoflagrnagians}

 In this section we discuss paths of Lagrangians and connected components of the space of Lagrangians. We fix a ring $R$ which we assume to be regular, commutative, equipped with an involution, and such that $2$ is invertible.

The ring extension $R[T]$ comes with the evaluation homomorphisms $ev_0$ and $ev_1$; the induced action on modules, the extension of scalars, maps $\ifpd[T]$-modules to $\ifpd$-modules -- for brevity, we shall call this action simply as evaluation. Due to the homotopy invariance of $\K_0$, we can morally think of FGP modules over $R[T]$ as of ``paths of modules'' over $R$. We say that two Lagrangians $\La_1,\La_2\in \LL(R)$ are homotopic if there exists a ``path" $\alpha \in \mathcal{L}_{\scaleto{R[T]\otimes_R \L}{7pt}}(R[T])$ such that $ev_0^{*}(\alpha)=\La_1$ and $ev_1^{*}(\alpha)=\La_2$. Let us notice that composition of algebraic paths of Lagrangians is not defined.  discussing paths of modules, we use the module $R[T]\otimes_R \L$ obtained by extension of scalars induced by the inclusion $R\to R[T]$. For brevity, we shall denote $\L[T]\coloneqq R[T]\otimes_R \L$.

Algebraic homotopies used in this text are continuous in the following sense. By the Yoneda lemma, each algebraic homotopy defines a morphism from the affine line $\mathbb{A}^1_{R}$  to the Lagrangian Grassmannian over $R$. In other words, any such a homotopy defines a path between $R$-points and is continuous in the Zariski topology\footnote{We thank G. Shuklin for this remark.}. Observe that this data can also be interpreted as an $\mathbb{A}^1$-homotopy \cite{voevodski}.

Since our goal is to study $\lm$-unitaries, we shall be mostly working with free $R$-modules. Our first result relates the connected component of $\LL(R)$ with paths of $\lm$-unitaries.
\begin{proposition}
    Let $\L$ be a free $R$-module and $\alpha\in \CL_{L}(R[T])$ be such that $\alpha(0)=L$, then there exists a free $R$-module $\L'$ and $\Phi\in \hp^{-}(\L\oplus \L';R[T])$ with $\Phi(0)=1$ such that 
    \begin{align*}
        \alpha\oplus \L'[T] =\Phi\cdot (\L[T]\oplus \L'[T])\in \CL_{L}(R[T])\,.
    \end{align*}
\end{proposition}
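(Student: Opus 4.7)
My approach is to first stabilize $\alpha$ so that it becomes free, invoke the transitivity of the $\lm$-unitary group on free Lagrangians (Proposition 3.10 of \cite{Haah2022}) to obtain some $\Psi$ realizing the stabilized path, and finally normalize by $\Psi(0)^{-1}$ to enforce the initial condition $\Phi(0)=1$.

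For the stabilization step, I would argue as follows. Since $\alpha$ is a direct summand of $\hm(\L[T])$, it is a finitely generated projective $R[T]$-module, and its value $\alpha(0)=\L$ at $T=0$ is free over $R$. The homotopy invariance of $\K_0$ for regular $R$ gives $\K_0(R[T])\cong \K_0(R)$ via $ev_0$, so $[\alpha]=[\L[T]]$ in $\K_0(R[T])$. Therefore there exists an FGP $R[T]$-module $N$ with $\alpha\oplus N\cong \L[T]\oplus N$. Choosing an FGP $N'$ with $N\oplus N'\cong R[T]^{k}$ for some $k$ and setting $\L'\coloneqq R^{k}$, we obtain the $R[T]$-module isomorphism
\begin{align*}
\alpha\oplus \L'[T]\cong \L[T]\oplus \L'[T]\,.
\end{align*}
Both sides are free Lagrangians of the same rank in $\hm(\L[T]\oplus \L'[T])$, so the free-Lagrangian transitivity result supplies some $\Psi\in \hp^{-}(\L\oplus \L'; R[T])$ with $\Psi\cdot(\L[T]\oplus \L'[T])=\alpha\oplus \L'[T]$.

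To enforce the initial condition, I would evaluate at $T=0$: since $\alpha(0)=\L$, the element $\Psi(0)\in \hp^{-}(\L\oplus \L'; R)$ lies in the stabilizer of the standard Lagrangian $\L\oplus \L'$. Setting $\Phi\coloneqq \Psi\cdot\Psi(0)^{-1}$, with $\Psi(0)^{-1}$ viewed as a constant element of $\hp^{-}(\L\oplus \L'; R[T])$ via $R\hookrightarrow R[T]$, we obtain $\Phi(0)=1$ and
\begin{align*}
\Phi\cdot(\L[T]\oplus \L'[T])=\Psi\cdot(\L[T]\oplus \L'[T])=\alpha\oplus \L'[T]\,,
\end{align*}
since $\Psi(0)^{-1}$ fixes the standard Lagrangian.

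The main obstacle I anticipate is the stabilization step, specifically producing a stabilizing module of the special form $\L'[T]$ rather than an arbitrary FGP $R[T]$-module. This is precisely where homotopy invariance of $\K_0$ is decisive: without it, $\alpha$ and $\L[T]$ would only be known to be stably isomorphic over $R[T]$, with no control over the stabilizing piece being extended from $R$. The remaining steps are essentially formal once the free-Lagrangian transitivity result is granted.
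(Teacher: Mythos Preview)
Your proof is correct and follows essentially the same route as the paper: use homotopy invariance of $\K_0$ to make $\alpha$ stably free by adding $R[T]^k$, invoke Proposition 3.10 of \cite{Haah2022} to obtain a $\lm$-unitary realizing the stabilized Lagrangian, and then right-multiply by the inverse of its value at $T=0$ (which stabilizes the standard Lagrangian since $\alpha(0)=\L$) to normalize. Your explicit observation that $\Psi(0)$ lies in the stabilizer of $\L\oplus \L'$ is exactly the point the paper leaves implicit when it writes $\Phi(T)=\Phi'(T)\Phi'(0)^{-1}$.
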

\begin{proof}
The homotopy invariance of $\K_0$ states that the evaluation map $\K_0(R[T])\to K_0(R)$ is an isomorphism, which in our case means that the class of $\alpha$ in $\K_0(R[T])$ is the same as the class of $\L$ in $\K_0(R)$. Since $\L$ is free, its class in $\K_0(R)$ is trivial, and the isomorphism $\K_0(R)\to K_0(R[T])$ is given by the extension of scalars, $\alpha\oplus R[T]^m$ must be free for some natural $m$. 
 Therefore, due to Proposition 3.10 of \cite{Haah2022}, there exists $\Phi\in \hp^{-}(\L[T]\oplus R[T]^m;R[T])$ such that $\alpha\oplus R[T]^m=\Phi'\cdot(\L[T]\oplus R^m[T])$. In case $\Phi'(0)\neq 1$, we can always redefine $\Phi(T)=\Phi'(T) \Phi'^{-1}(0)$ such that $\Phi(0)=1$.
\end{proof}
We notice that due to the Serre-Suslin-Swan theorem \cite{Suslin1977,Swan1978}, all FGP modules over \linebreak $R=\ifpd$ are free, in particular, $\alpha$ is free, and for such rings the statement of this proposition holds without the stabilization.

 The following lemma demonstrates that a path of $\l^{-}$-unitaries that starts at the identity is a composition of ``shifts'' and elementary $\lm$-unitaries up to stabilization. 
\begin{proposition}
    Let $\L$ be a free $R$-module and $\Phi\in \hp^{-}(\L[T];R[T])$ with $\Phi(0)=1$, then there exists a free module $\L'$, an element $\Psi \in \ehp^{-}(\L[T]\!\oplus\!L'[T];R[T])$, and an element $\Gamma\in \gl(\L[T]\oplus L'[T];R[T])$ such that $\Phi=\Psi \mbox{H}^{\,-}(\Gamma)$.
\end{proposition}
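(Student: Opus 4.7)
The plan is to deduce the statement from the homotopy invariance of the Hermitian K-theory functor $\umod^-$ recalled in Section~\ref{sec:algebraicintro}. Morally, a $\lm$-unitary path $\Phi$ starting at the identity represents the trivial class in $\umod^-(R[T])$, and unpacking this triviality is exactly the stable decomposition $\Phi = \Psi\,\h^-(\Gamma)$ required.

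To carry this out, I would first consider the class $[\Phi] \in \umod^-(R[T])$. The evaluation homomorphism $ev_0 : R[T] \to R$ induces a group homomorphism $(ev_0)_* : \umod^-(R[T]) \to \umod^-(R)$ sending $[\Phi]$ to $[\Phi(0)] = [1_\L] = 0$. Since $R$ is assumed regular and $2$ is invertible, the homotopy invariance of $\umod^-$ (a consequence of Karoubi's fundamental theorem, cf.\ Corollary~0.8 of \cite{KaroubiPeriodicity} together with \cite{Karoubi1980}) tells us that $(ev_0)_*$ is an isomorphism. Hence $[\Phi] = 0$ already in $\umod^-(R[T])$.

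Unwinding the definition \eqref{def:UnitariesModElemModHyperb}, vanishing of $[\Phi]$ means that, in the stable group $\hp^-(R[T])$, the element $\Phi$ lies in the subgroup $\ehp^-(R[T])\,\h^-(\gl(R[T]))$. That is, there exist $\Psi$ in the stable elementary $\lm$-unitary group and $\Gamma$ in the stable general linear group with $\Phi = \Psi\,\h^-(\Gamma)$. Because the stable groups are directed colimits along the embeddings $u \mapsto u \oplus 1$ on free modules of growing rank, this equality already holds on some finite-rank free stabilization $\L \oplus \L'$, with $\Phi$ extended by the identity on $\L'[T]$. This yields the triple $(\L', \Psi, \Gamma)$ demanded by the statement.

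The main input, and the only non-formal ingredient of the argument, is the homotopy invariance of $\umod^-$, which is a nontrivial theorem in Hermitian K-theory but is cited by the paper as a black box; everything else is bookkeeping in a colimit. In particular, one does not need to arrange $\Psi(0)=1$ or $\Gamma(0)=1$, since the proposition only asks for the existence of the decomposition, not any normalization at $T=0$.
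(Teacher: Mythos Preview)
Your proof is correct and follows essentially the same approach as the paper: both invoke the homotopy invariance of $\umod^-$ to conclude that $[\Phi]=0$ in $\umod^-(R[T])$ from $\Phi(0)=1$, and then unwind the definition to obtain the stable decomposition $\Phi=\Psi\,\h^-(\Gamma)$. Your version is in fact a bit more explicit about the colimit bookkeeping that produces the finite-rank $\L'$.
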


\begin{proof}
We use the homotopy invariance of $\umod^{-}(R)=\hp^{-}(R)/\ehp^{-}(R)\h(\gl(R))$: $\umod^{-}(R[T])\cong \umod^{-}(R)$ where isomorphism $\umod^{-}(R[T])\to\umod^{-}(R)$ is induced by the evaluation map. Therefore, if a path of $\lm$-unitaries  $\Phi$ evaluates to the identity, then it must represent the trivial element in $\umod^{-}(R[T])$. A lift of the trivial element of $\umod^{-}(R[T])$ is a $\l^{-}$-unitary which is a composition of an elementary $\lm$-unitary and a shift, up to stabilzation.
\end{proof}

As a corollary from the lemmas above, we obtain a characterization of the connected component of $\CL_L(R)$. 

\begin{corollary}\label{Corollary:CLtoESp}
Let $\L$ be a free $R$-module and $\alpha\in \CL_L(R[T])$ be such that $\alpha(0)=L$ and $\alpha (1)=\Lambda$. Then, there exists a free $R$-module $L'$ such that 
\begin{align}
    \alpha \oplus L'[T]=\Psi (L[T]\!\oplus\!L'[T])
\end{align}
for some $\Psi\in \ehp^{-}(\L[T]\!\oplus\!L'[T];R[T])$.

\end{corollary}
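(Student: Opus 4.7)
The plan is to chain together the two preceding propositions, the only additional observation being that hyperbolic automorphisms preserve the standard Lagrangian as a set. First, I would apply the first proposition to $\alpha$: this produces a free $R$-module $L_1$ and an element $\Phi\in\hp^{-}(\L\oplus L_1;R[T])$ with $\Phi(0)=1$ such that $\alpha\oplus L_1[T]=\Phi\cdot(\L[T]\oplus L_1[T])$. Since $\L\oplus L_1$ is again free, the hypothesis of the second proposition is satisfied.

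Applying the second proposition to this $\Phi$ then yields a further free $R$-module $L_2$, an elementary $\lm$-unitary $\Psi\in\ehp^{-}(\L\oplus L_1\oplus L_2;R[T])$, and $\Gamma\in\gl(\L\oplus L_1\oplus L_2;R[T])$ such that, after absorbing the stabilization by the identity on $\h^{-}(L_2[T])$, one has $\Phi=\Psi\,\h^{-}(\Gamma)$. The final ingredient is that the hyperbolic lift has the block form
\begin{align*}
\h^{-}(\Gamma)=\begin{pmatrix}\Gamma & 0\\ 0 & (\Gamma^{*})^{-1}\end{pmatrix},
\end{align*}
so it acts by $(X,0)\mapsto(\Gamma X,0)$; because $\Gamma$ is an $R[T]$-module automorphism, this restricts to a bijection of the standard Lagrangian $(\L\oplus L_1\oplus L_2)[T]$ onto itself. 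Consequently $\h^{-}(\Gamma)\cdot(\L\oplus L_1\oplus L_2)[T]=(\L\oplus L_1\oplus L_2)[T]$, and chaining this with the two previous displayed identities yields $\alpha\oplus(L_1\oplus L_2)[T]=\Psi\cdot(\L\oplus L_1\oplus L_2)[T]$. Setting $L'=L_1\oplus L_2$ then gives the statement.

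I do not anticipate any serious obstacle — the two previous propositions do essentially all the work. The only points requiring care are the bookkeeping of the two successive stabilizations and checking that the combined stabilization is still free, which is automatic since a direct sum of free modules is free. In particular, the endpoint condition $\alpha(1)=\Lambda$ plays no role in the argument; only $\alpha(0)=L$ is used, through the first proposition.
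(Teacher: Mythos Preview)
Your proposal is correct and follows exactly the route the paper intends: the corollary is stated without proof because it is the straightforward concatenation of the two preceding propositions, with the single extra observation (which you make explicit) that $\h^{-}(\Gamma)$ fixes the standard Lagrangian setwise. Your bookkeeping of the two stabilizations and your remark that only the condition $\alpha(0)=\L$ is actually used are both accurate.
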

In other words, a path of Lagrangians beginning at the base point can be presented as a path of elementary $\lm$-unitaries up to stabilization. Vice versa, if a Lagrangian is connected to the standard Lagrangian by an elementary $\lm$-unitary, then there is a homotopy between the two.

\begin{proposition}
\label{Prop:EsptoPath}
    Let  $\La\in \CL_{L}(R)$ be such that $\La=\Psi \cdot \L$ for some $\Psi\in \ehp^{-}(\L;R)$. Then, there exists a path $\alpha\in \CL_L(R[T])$ such that $\alpha(0)=L$ and $\alpha (1)=\Lambda$
\end{proposition}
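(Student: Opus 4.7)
The plan is to lift the given $\Psi$ to a polynomial family of $\lm$-unitaries $\Psi(T)\in \hp^{-}(\L[T];R[T])$ satisfying $\Psi(0)=1$ and $\Psi(1)=\Psi$, and then define the desired path of Lagrangians by $\alpha(T):=\Psi(T)\cdot \L[T]$. Since $\Psi(T)$ is an automorphism preserving $\lambda^{-}$, $\alpha(T)$ is automatically a Lagrangian in $\hm(\L[T])$ and hence an element of $\CL_\L(R[T])$, with $\alpha(0)=\L$ and $\alpha(1)=\Psi\cdot\L=\Lambda$. Everything therefore reduces to producing $\Psi(T)$.

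To build $\Psi(T)$, I would factor $\Psi$ into the generators of $\ehp^{-}(\L;R)$ described in \eqref{def:elun}: write
\[
\Psi \;=\; g_1\, g_2\, \cdots\, g_n,
\]
with each $g_k$ of the form $E_0(q_k)$ for some $+$hermitian $q_k$ on $\L$, or $E_1(q_k)$ for some $+$hermitian $q_k$ on $\L^*$. Because the involution on $R[T]$ fixes $T$, the rescaled form $T\,q_k$ is again $+$hermitian when viewed over $R[T]$, so $g_k(T):=E_{i_k}(T\,q_k)$ makes sense as an element of $\ehp^{-}(\L[T];R[T])$, specializing to $1$ at $T=0$ and to $g_k$ at $T=1$. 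I then set
\[
\Psi(T) \;:=\; g_1(T)\,g_2(T)\,\cdots\, g_n(T) \;\in\; \ehp^{-}(\L[T];R[T]),
\]
which is a polynomial-valued elementary $\lm$-unitary with the required endpoint values.

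The one conceptual point worth flagging is that although the paper warns that concatenation of algebraic paths of Lagrangians is not defined, no concatenation is used here: $\Psi(T)$ is the \emph{pointwise product} of the families $g_k(T)$ inside the group $\hp^{-}(\L[T];R[T])$, which is a perfectly legitimate polynomial operation. For this reason the proof is essentially obstruction-free and requires no stabilization; the only routine verifications are that $T q_k$ is $+$hermitian over $R[T]$ (immediate from $\overline{T}=T$) and that applying the $\lm$-unitary $\Psi(T)$ to $\L[T]$ yields a Lagrangian over $R[T]$, which is standard.
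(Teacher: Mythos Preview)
Your proof is correct and follows essentially the same approach as the paper: factor $\Psi$ into elementary generators $E_{i_k}(q_k)$, replace each $q_k$ by $Tq_k$, and take the resulting product applied to $\L[T]$ as the path. The paper's proof is the one-line version of exactly this construction.
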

\begin{proof}
    By the assumption, $\Psi= E_0(q_0)\cdot E_1(q_1)  \cdots  E_0(q_{n-1})\cdot E_1(q_n)$ where $q_1,\dots, q _n$ are +hermitan forms defined on $L$ and $L^*$ alternatingly and where some of them can be zero-forms. The desired path is given by $\alpha =E_0(T q_0)\cdot E_1(Tq_1)  \cdots  E_n(Tq_n)\cdot \L$.
\end{proof}

 Combining Corollary \ref{Corollary:CLtoESp} and Proposition \ref{Prop:EsptoPath}, we obtain the following bijection of monoids:
\begin{align}\label{pi0LvsL/ESp}
    \pi_0\CL(R)\cong \CL(R)/\ehp^{-}(R)\,.
\end{align}

\subsection{Sturm sequences}
We have shown that any path of Lagrangians corresponds to an elementary $\lm$-unitary (in a highly non-unique way) which is encoded by a sequence of $+$hermitian forms. In what follows, we shall obtain a Lagrangian transversal to any Lagrangian obtained from the standard Lagrangian by an elementary $\lm$-unitary. This result will allow us to obtain a concise representation for any path of Lagrangians in $\CL(R)$.

We begin with a review of the technique of Sturm sequences of \cite{BL} adjusted for rings with involution. Let us fix a free $R$-module $\L$ and introduce the following notation: 
\begin{align*}
\L_n\coloneqq \begin{cases}
    \L \;\;,\quad n\equiv 0\;\mbox{mod}\;2\,\\
    \L^* \,,\quad n\equiv 1\; \mbox{mod}\;2\,\\
\end{cases}\quad 
\E_n\coloneqq \begin{cases}
    \E_0 \,,\quad n\equiv 0\;\mbox{mod}\;2\,\\
    \E_1 \,,\quad n\equiv 1\; \mbox{mod}\;2\,\\
\end{cases}
\quad 
\sigma_n\coloneqq \begin{cases}
    1 \,,\quad n\equiv 0\;\mbox{mod}\;2\,\\
    \sigma \,,\quad n\equiv 1\; \mbox{mod}\;2\,\\
\end{cases}
\end{align*} 
where $\sigma: \hm(\L)\to \hm(\L^*)$ is a module isomorphism whose matrix representation is that of $\l^-$. A Sturm sequence of type $(m,n)$ is a sequence of $+$hermitian forms $(q_m,q_{m+1},\dots, q_n)$ where $q_n\in \mathcal{S}_{{\scaleto{\L}{5pt}}_n}$: such  a sequence is denoted by $\underline{q}$. Any Sturm sequence $\underline{q}$ defines an elementary $\lm$-unitary:
\begin{align*}
    \E(\underline{q})=\E_{m}(q_m)\cdots \E_{n}(q_n)\,.
\end{align*}
\begin{proposition}\label{prop:threetotwo}
    Let $\underline{q}$ be a Sturm sequence of type $(m,n)$ and $(x_{m-1},x_m,\dots ,x_n,x_{n+1})$ be a sequence of elements $x_i\in \L_i$ with $m-1\leqslant i\leqslant n+1$. Then, for any $m\leqslant k\leqslant n$ the following conditions are equivalent:

    1) The sequence $(x_{k-1},x_k,x_{k+1})$ satisfies
 \begin{align*}
        x_{k-1}+(-1)^k q_k(x_k)+x_{k+1}=0\,.
\end{align*}
    2) The sequence $(x_{k-1},x_k,x_{k+1})$ satisfies
 \begin{align*}
        (x_{k-1},x_k)=(-1)^{k-1}(\sigma_{k-1}\cdot E_k(q_k)\cdot \sigma_k^{-1})(x_k,x_{k+1})
\end{align*}
Here $(x_k,x_{k+1})$ is identified with the element of $x_k\oplus x_{k+1}\in \hm(\L_k)$.

\begin{proof}
    The proof is by elementary unpacking the definitions. 
\end{proof}

\end{proposition}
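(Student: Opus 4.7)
The plan is to prove (1) $\Leftrightarrow$ (2) by direct matrix computation, handling the cases $k$ even and $k$ odd separately. In each case I would substitute the explicit matrix formulas for $E_k(q_k)$, $\sigma_k$, and $\sigma_{k-1}$, evaluate the composite on $(x_k,x_{k+1})$, and compare the result with $(x_{k-1},x_k)$. The second coordinate of the equality will reduce to the tautology $x_k=x_k$, while the first coordinate will produce exactly (1) once one accounts for the overall sign $(-1)^{k-1}$ in (2) and the internal sign $(-1)^k$ in (1).

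First I would record the needed formulas. From \eqref{def:elun}, $E_0(q)(a,b)=(a,q(a)+b)$ and $E_1(q)(a,b)=(a+q(b),b)$, and identifying $\sigma\colon \h^{-}(\L)\to \h^{-}(\L^*)$ with its matrix $\l^{-}=\scaleto{\begin{pmatrix}0&1\\-1&0\end{pmatrix}}{15pt}$ gives $\sigma(a,b)=(b,-a)$ and $\sigma^{-1}(c,d)=(-d,c)$. I would also note that $\L_k\oplus \L_{k+1}=\h^{-}(\L_k)$ and $\L_{k-1}\oplus \L_k=\h^{-}(\L_{k-1})$ (since consecutive $\L_i$'s are dual to each other), so the composite $\sigma_{k-1}\cdot E_k(q_k)\cdot\sigma_k^{-1}$ makes sense as a map $\h^{-}(\L_k)\to \h^{-}(\L_{k-1})$ without further reinterpretation.

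For $k$ even, I have $\sigma_k=1$, $\sigma_{k-1}=\sigma$, and $E_k(q_k)=E_0(q_k)$; the composite evaluates to $\sigma(x_k,q_k(x_k)+x_{k+1})=(q_k(x_k)+x_{k+1},-x_k)$, and multiplying by $(-1)^{k-1}=-1$ and equating to $(x_{k-1},x_k)$ recovers $x_{k-1}+q_k(x_k)+x_{k+1}=0$, which is (1) since $(-1)^k=1$. For $k$ odd, the conventions flip: $\sigma_k=\sigma$, $\sigma_{k-1}=1$, and $E_k(q_k)=E_1(q_k)$; then $\sigma^{-1}(x_k,x_{k+1})=(-x_{k+1},x_k)$ and $E_1(q_k)$ sends this to $(q_k(x_k)-x_{k+1},x_k)$, while the prefactor $(-1)^{k-1}=1$ is trivial, so equating to $(x_{k-1},x_k)$ yields $x_{k-1}-q_k(x_k)+x_{k+1}=0$, which is (1) since $(-1)^k=-1$.

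There is no real obstacle here --- the proof is bookkeeping. The only thing to verify carefully is that the apparently asymmetric sign conventions in the statement (the $(-1)^{k-1}$ outside in (2) and the $(-1)^k$ inside (1)) are chosen precisely so that both parity cases collapse to the same linear relation; the computation above is simply the verification that this choice works.
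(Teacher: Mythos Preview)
Your proposal is correct and is exactly what the paper intends: the paper's proof is the single sentence ``The proof is by elementary unpacking the definitions,'' and your case-by-case matrix computation is precisely that unpacking. There is nothing to add.
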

For a Sturm sequence $\underline{q}$ of type $(m,n)$, we define $\L_{m,n}\coloneqq \displaystyle{\oplus_{k=m}^n}\L_k$ and $S(\underline{q})$ be a $+$hermitian form on $L_{m,n}$ with the following matrix representation 
\begin{align}
\begin{pmatrix}
\scaleto{(-1)^mq_m}{9pt} & 1 & &  &  &  & & \\
1 & \scaleto{(-1)^{m+1}q_{m+1}}{9pt} & 1 & & &  & &\\
 & 1 &  &  &  & & &\\
 &  &  &  &  & & &\\
 &  &  &  &  & & &  1\\
  &   &  &  &   & & 1 & \scaleto{(-1)^nq_n}{9pt}  \\
\end{pmatrix}\,.
\end{align}

The following proposition gives a constructive completion to any Lagrangian obtained from the standard Lagrangian by an elementary $\lm$-unitary.

\begin{proposition}\label{Prop:SturmSeq}
Let $\L$ be a free $R$-module and let $\underline{q}=(q_0,\dots, q_{2n})$ and $\underline{q}'=(q_0,\dots, q_{2n-1})$ be a pair of Sturm sequences of type $(0,2n)$ and $(0,2n\!-\!1)$ respectively. Denote by $\La$ the Lagrangian defined by the first sequence:
\begin{align}
    \La=\E(\underline{q})\cdot L\,
\end{align}
then, the following Lagrangians are transverse
\begin{align}
    \begin{pmatrix}
    1 & 0 \\
    S(\underline{q}') & 1 
    \end{pmatrix}\cdot (L\oplus L_{1,2n-1}) \pitchfork \La \oplus L_{1,2n-1}\,.
\end{align}
\end{proposition}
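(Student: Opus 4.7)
The plan is to reduce the transversality assertion to the invertibility of a specific block of a single $\lambda^{-}$-unitary, and to establish this invertibility through a Sturm continuant identity. Using the identification $\hm(L_{0,2n-1})=\hm(L)\oplus\hm(L_{1,2n-1})$ and the graph characterization~\eqref{L*transvesal}, the proposition reduces to showing that the lower-left block $C$ (in the primal--dual decomposition) of
\[
\Psi\;=\;\begin{pmatrix}1&0\\ -S(\underline{q}')&1\end{pmatrix}\bigl(\E(\underline{q})\oplus 1_{\hm(L_{1,2n-1})}\bigr)
\]
is an $R$-module isomorphism. Writing $\E(\underline{q})=\scaleto{\begin{pmatrix}a_{11}&a_{12}\\ a_{21}&a_{22}\end{pmatrix}}{16pt}$ on $\hm(L)$, a direct block computation produces
\[
C\;=\;\begin{pmatrix}
a_{21}-q_0 a_{11} & -e_1^{\top}\\
-a_{11}\,e_1 & -S(\underline{q}'')
\end{pmatrix},
\]
with $e_1=(1,0,\dots,0)^{\top}$ and $\underline{q}''=(q_1,\dots,q_{2n-1})$ the shifted Sturm sequence of type $(1,2n-1)$.

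Expanding $\det C$ along the first column and using the three-term recurrence $D_k=(-1)^k q_k D_{k+1}-D_{k+2}$ (with $D_{2n}=1$ and $D_{2n-1}=-q_{2n-1}$) for the tail determinants $D_k=\det S(q_k,\dots,q_{2n-1})$, one obtains the clean expression $\det C=a_{11}D_0-a_{21}D_1$. The core of the proof is then the Wronskian-type identity $\det C=(-1)^n$. I would establish this by showing $D_0=(-1)^n T_{2n}$ and $D_1=(-1)^n R_{2n}$, where $P_k,Q_k,R_k,T_k$ are the continuants tracking the two columns of $\E_0(q_0)\cdots\E_k(q_k)$ (with $P,Q$ updated by $\E_0(q_k)$ at even $k$ and $R,T$ updated by $\E_1(q_k)$ at odd $k$), so that
\[
a_{11}D_0-a_{21}D_1\;=\;(-1)^n\bigl(P_{2n}T_{2n}-Q_{2n}R_{2n}\bigr)\;=\;(-1)^n\det\E(\underline{q})\;=\;(-1)^n,
\]
since each factor $\E_k(q_k)$ has determinant $1$. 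Because $\pm1$ is a unit, $C$ is invertible and the claimed transversality follows. The continuant identities $D_0=(-1)^nT_{2n}$ and $D_1=(-1)^nR_{2n}$ themselves admit a direct proof by induction on $n$, comparing the alternating update rules for $P_k,Q_k,R_k,T_k$ with the three-term recurrence for $D_k$; the base cases $n=1$ and $n=2$ yield $\det C=-1$ and $+1$ respectively.

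As a consistency check, the injectivity $\La_1\cap\La_2=0$ can be verified independently via Proposition~\ref{prop:threetotwo}. The conditions $(S(\underline{q}')x)_k=0$ for $k=1,\dots,2n-1$ translate into the three-term recurrences $x_{k-1}+(-1)^k q_k x_k+x_{k+1}=0$ with the implicit boundary $x_{2n}=0$, while membership $(x_0,z_0)\in\La$ yields a parallel recurrence $y_{k+1}=y_{k-1}-q_k y_k$ on $k=0,\dots,2n$ with $y_{-1}=z_0$ and $y_{2n+1}=0$, extracted by tracking the intermediate vectors $(\alpha_k,\beta_k)=\E_k(q_k)\cdots\E_{2n}(q_{2n})(u,0)$ and assembling $y_{2j}=\alpha_{2j}$, $y_{2j-1}=\beta_{2j-1}$. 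The sign substitution $x_k=(-1)^{\lfloor k/2\rfloor}y_k$ identifies the two recurrences on the overlap range $k=1,\dots,2n-1$, after which the boundary conditions propagate to force $y_k=0$ for all $k$. The main obstacle is therefore the inductive Sturm continuant identity $D_0=(-1)^nT_{2n}$, $D_1=(-1)^nR_{2n}$, which is where all the combinatorial content of the statement is concentrated.
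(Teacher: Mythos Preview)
Your reduction of the transversality to the invertibility of the lower-left block $C$ of $\Psi$ is correct and is a natural reformulation. The gap is in the determinant computation that follows: you treat the hermitian forms $q_k$ and the blocks $a_{ij}$ of $\E(\underline{q})$ as commuting scalars. The ``expansion along the first column'' yielding $\det C=a_{11}D_0-a_{21}D_1$, the continuant recurrence $D_k=(-1)^kq_kD_{k+1}-D_{k+2}$, and the Wronskian step $P_{2n}T_{2n}-Q_{2n}R_{2n}=\det\E(\underline{q})$ are all valid only when $\L$ has rank one. For $\L$ of rank $m>1$ the $q_k$ are module maps that do not commute; the identity satisfied by the blocks of a $\lm$-unitary is $P^*T-Q^*R=1$, not $PT-QR=1$, and already your expression $-a_{11}e_1$ fails to typecheck (it should be $-e_1a_{11}$). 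A concrete check: for $n=1$ one has $C=\scaleto{\begin{pmatrix}q_2&-1\\-(1+q_1q_2)&q_1\end{pmatrix}}{18pt}$, and a Schur-complement computation gives $\det C=(-1)^m$, not $(-1)^n=-1$. The map $C$ is still invertible, so the conclusion survives, but your derivation does not.

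Your consistency check establishes only $\La_1\cap\La_2=0$. Over a general ring satisfying the paper's hypotheses this is strictly weaker than transversality: in $\hm(R)$ with $R=k[x]$ (trivial involution, $\mathrm{char}\,k\neq 2$), the Lagrangians $R(1,0)$ and $R(1,x)$ intersect trivially but do not span. So neither branch of your argument closes for general free $\L$.

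The paper's proof takes a different, coordinate-free route. It shows $X(\underline{q}')\pitchfork\L_{1,2n-1}^\perp$ directly from the shape of $S(\underline{q}')$, then computes the image of $X(\underline{q}')\cap\L_{1,2n-1}^\perp$ under the reduction map $\L_{1,2n-1}^\perp\to\L_{1,2n-1}^\perp/\L_{1,2n-1}\cong\hm(\L_0)$. Using the three-term recurrence of Proposition~\ref{prop:threetotwo} iteratively, this image is identified with $\E(\underline{q}')\cdot\L_{2n-1}$, and the claim then reduces to the elementary fact $\E(\underline{q}')\cdot\L^*\pitchfork\E(\underline{q})\cdot\L$ inside $\hm(\L)$. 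This argument is insensitive to the rank of $\L$. Your approach could plausibly be salvaged by replacing the scalar determinant identity with an explicit construction of $C^{-1}$ via a matrix-valued continuant (the $n=1$ case above suggests how), but carried out correctly that becomes a repackaging of the same recurrence the paper exploits.
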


\begin{proof}
    We begin with noticing that $\L_{1,2n-1}^{\perp}=\L_{0,2n-1}\oplus \L_0^*$ within $\hm(\L_{0,2n-1})$ (obviously, $\L_{1,2n}$ is not a Lagrangian in $\hm(\L_{0,2n-1})$ as it is not coisotropic). Let us introduce the notation 
    \begin{align*}
        X(\underline{q}')\coloneqq \begin{pmatrix}
        1 & 0\\
        S(\underline{q}') & 1
    \end{pmatrix}\cdot \L_{0,2n-1}
    \end{align*} 
    and notice that $X(\underline{q})\pitchfork L_{1,2n-1}^{\perp}$, what follows from the concrete form of $S(\underline{q})$. 
    
    Let $i:\L_{1,2n-1}\to \L_{0,2n-1}$ be an inclusion and $i^{*}:\L_{0,2n-1}^*\to \L_{1,2n-1}^*$ be its dual. The transversality condition $X(\underline{q})\pitchfork L_{1,2n}^{\perp}$ implies that $i^*\circ S(\underline{q})$ is a surjective map. Let us denote the kernel of $i^*\circ S(\underline{q})$ by $\kappa\coloneqq \mbox{ker}(i^*\circ S(\underline{q}))$. An element $(x_0,\dots x_{2n-1})\in \L_{0,2n-1}$ belongs to $\kappa$ iff it is a solution of the following system
    \begin{align}\label{Kernel}
        x_{k}+(-1)^{k}q_k(x_k)+x_{k+1}=0\quad \mbox{for} \;1\leqslant k \leqslant 2n-1
    \end{align}

where we put $x_{2n+1}=0$. Algebraic system \eqref{Kernel} has rank $2n\!-\!1$ and we can choose $x_{2n-1}$ as an independent solution. Let $\rho: \kappa \to \L_{2n-1}$ be the isomorphism of ``solving system \eqref{Kernel}''. It is elementary to check that 
\begin{align*}
    X(\underline{q})\cap L_{1,2n-1}^{\perp}=\begin{pmatrix}
    1 & 0\\
    S(\underline{q}) & 1
\end{pmatrix}\cdot \kappa\,.
\end{align*}
Let $\mu$ be the projection $X(\underline{q})\cap L_{1,2n-1}^{\perp}\to  \L_{1,2n-1}^{\perp}/\L_{1,2n-1}$. We wish to track the image of $x_{2n-1}\in \L_{2n-1}$ under the sequence of maps 
\begin{align*}
    \L_{2n-1}\to \kappa \to X(\underline{q})\cap L_{1,2n-1}^{\perp}\to  \L_{1,2n-1}^{\perp}/\L_{1,2n-1}\cong \hm(\L_0)\,.
\end{align*}
We take an element $x_{2n-1}\in \L_{2n-1}$ and rewrite it through system \eqref{Kernel}. The composition $\mu\circ \begin{pmatrix}
    1\\
    S(\underline{q})
\end{pmatrix}\circ \rho^{-1}(x_{2n-1})$ is equal to $E(\underline{q}')\cdot \begin{pmatrix}
    x_{2n}\\
    0
\end{pmatrix}$ -- this is a result of consecutive applications of point $2)$ of Proposition \ref{prop:threetotwo}. Therefore, the image of $X(\underline{q}')\cap \L_{1,2n}$ in $\hm(L_0)$ is given by $E(\underline{q}')\cdot \L_{2n-1}$.

To this end, we use an elementary fact $E(\underline{q}')\cdot \L_{2n-1}\pitchfork E(\underline{q})\cdot \L_{2n}$ holding for any $+$hermitian form $q_{2n}$. We conclude the proof with the following observation
\begin{align*}
E(\underline{q}')\cdot \L_{2n-1}\pitchfork E(\underline{q})\cdot \L_{2n}  \Leftrightarrow X(\underline{q}')\pitchfork E(\underline{q})\cdot\L_{2n}\oplus  L_{1,2n-1}
\end{align*}
for which we essentially use the fact that $E(\underline{q}')\cdot \L_{2n-1}$ is the image of $\mu$.
\end{proof}

 This proposition associates a single $+$hermitian form defined on a bigger space to any Lagrangian obtained from the standard Lagrangian by a sequence of elementary $\lm$-unitaries. We immediately obtain a corollary from this proposition by using equivalence \eqref{L*transvesal}. 
 \begin{corollary}\label{cor:stabsturm}
Let $\L$ be a free $R$-module, $\underline{q}=(q_0,\dots, q_{2n})$ be a Sturm sequence of type $(0,2n)$, and $\La=\E(\underline{q})\cdot L$. Then, there exists a free $R$-module $\L'$, $+$hermitian form $X$ on $\L\oplus \L'$ and a $+$hermitian form $Y$ on $(\L\oplus \L')^{*}$  such that 
\begin{align}
\La\oplus \L'=\begin{pmatrix}
    1 & 0\\
    X & 1
\end{pmatrix}\begin{pmatrix}
    1 & Y\\
   0 & 1
\end{pmatrix}\cdot(\L\oplus \L')^{*}\,.
\end{align}
 \end{corollary}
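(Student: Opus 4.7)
The plan is to derive the statement directly from Proposition \ref{Prop:SturmSeq} together with the graph characterization \eqref{L*transvesal}. As the text signals, the argument is essentially bookkeeping: one applies the proposition, then trades the transversality against an $\E_0$-graph for transversality against the standard Lagrangian by conjugating with an elementary $\lm$-unitary, at which point \eqref{L*transvesal} produces the second factor.

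Concretely, I would set $\L' \coloneqq \L_{1,2n-1}$, which is free as a direct sum of copies of $\L$ and $\L^*$. From the given Sturm sequence $\underline{q}=(q_0,\ldots,q_{2n})$ I form the truncated sequence $\underline{q}'=(q_0,\ldots,q_{2n-1})$ of type $(0,2n-1)$ and apply Proposition \ref{Prop:SturmSeq} verbatim. This yields
\[
\E_0(X)\cdot(\L\oplus\L')\;\pitchfork\;\La\oplus\L', \qquad X\coloneqq S(\underline{q}'),
\]
where $X$ is a $+$hermitian form on $\L\oplus\L'$.

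Next, since every $\lm$-unitary acts as a module automorphism of $\hm(\L\oplus\L')$, the transversality relation is preserved under its action. Applying $\E_0(X)^{-1}=\E_0(-X)$ to both Lagrangians above converts the displayed transversality into
\[
\L\oplus\L'\;\pitchfork\;\E_0(-X)\cdot(\La\oplus\L').
\]
Now the right-hand side is a Lagrangian in $\hm(\L\oplus\L')$ transverse to the standard Lagrangian, so \eqref{L*transvesal}, applied with $\L\oplus\L'$ in the role of $\L$, produces a $+$hermitian form $Y$ on $(\L\oplus\L')^*$ with
\[
\E_0(-X)\cdot(\La\oplus\L')=\E_1(Y)\cdot(\L\oplus\L')^*.
\]
Multiplying through by $\E_0(X)$ gives the desired decomposition $\La\oplus\L'=\E_0(X)\,\E_1(Y)\cdot(\L\oplus\L')^*$.

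I do not anticipate any substantive obstacle; the entire argument is a repackaging of Proposition \ref{Prop:SturmSeq} via a single translation in the Lagrangian Grassmannian. The only minor points to note are that $-X$ is still a $+$hermitian form (immediate from the linearity of $\sl$) and that $\E_0(X)$ belongs to the group of $\lm$-unitaries, which is why its action on Lagrangians preserves transversality. Both points are built into the setup used in Section \ref{sec:unitaries}.
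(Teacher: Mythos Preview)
Your proposal is correct and follows exactly the route the paper indicates: apply Proposition \ref{Prop:SturmSeq} with $\L'=\L_{1,2n-1}$ and $X=S(\underline{q}')$, then translate by $\E_0(-X)$ and invoke \eqref{L*transvesal} to produce $Y$. The paper's own proof is the single sentence ``by using equivalence \eqref{L*transvesal}''; you have simply made the conjugation step explicit.
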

 Proposition \ref{Prop:SturmSeq} and Corollary \ref{cor:stabsturm} provide with a convenient presentation of a Lagrangian obtained from the standard one by a sequence of elementary $\lm$-unitaries, and, consequently, to any Lagrangian homotopy equivalent to the standard Lagrangian. This presentation will be useful for the analysis of loops of Lagrangians that we conduct in later sections.

\section{An interlude}\label{interlude}

Let us comment more on the correspondence between stabilizer modules and stabilizer Hamiltonians. Any concrete stabilizer Hamiltonian like \eqref{Hamiltonian} generates an isotropic submodule of the Pauli module. Since the set of generators for any module is highly non-unique, many stabilizer Hamiltonians may generate the same stabilizer module. In the opposite direction, if two stabilizer Hamiltonians generate the same module, then they are homotopy-equivalent, i.e., there exists a path of gapped Hamiltonians connecting them, see Proposition 2.1 of \cite{haah2013}, which is formulated for qubits but can be adjusted to qudits of any prime dimension. Thus, there is a 1-1 correspondence between the stabilizer modules and homotopy equivalence classes of stabilizer Hamiltonians. We denote the homotopy class of the Hamiltonian $H$ by $\{H\}$. 

It turns out that such properties as the ground state degeneracy are characteristics of the stabilizer module and not of a specific Hamiltonian \cite{haah2013}. We focus on special classes of Hamiltonians corresponding to Lagrangian stabilizer modules as they are particularly simple. Let $\La_1$ and $\La_2$ be a pair of Lagrangians in the Pauli module that can be connected by an elementary $\lm$-unitary, and let $\{H_1\}$ and $\{H_2\}$ be a pair of corresponding classes of Hamiltonians, then, we have an equivalence:
\begin{align*}
        \{H_1\}\xleftrightarrow{\text{CFDQC}} \{H_2\} \quad &\Longleftrightarrow \quad \La_1 \xleftrightarrow{\ehp^{-}} \La_2
    \end{align*}

In the previous section we have discussed the algebraic homotopy relation between Lagrangian submodules. We can think of such a homotopy as a time-dependent Hamiltonian interpolating between the end points of the path. Note that the time parametrizing the path is not on the same footing as the spatial coordinates. The algebraic homotopies are defined through the polynomial extension of the Pauli algebra. Monomials in the non-extended Pauli algebra generate lattice translations, while the new time variable generates translations only in one direction. We observed that if two Lagrangians are connected by an elementary $\lm$-unitary, then they are homotopy equivalent and vice versa up to stabilization. In other words, we have an equivalence
\begin{align*}
        \{H_1\}\xleftrightarrow{\text{CFDQC}} \{H_2\} \quad &\Longleftrightarrow \quad [\La_1]\;\sim_h\; [\La_2]
    \end{align*}
where $[\La]$ stands for the stabilization of $[\La]$, i.e., the class of  Lagrangians obtained by adding all possible ancilla. In what follows, by a ``Hamiltonian", we shall mean  the homotopy class of Hamiltonians represented by this Hamiltonian.

Any stabilizer Hamiltonian corresponding to a Lagrangian stabilizer module can be disentangled by some Clifford QCA\footnote{Moreover, any invertible stabilizer Hamiltonian (having a unique ground-state on any lattice) can be disentangled with a Clifford QCA, see Theorem IV and Lemma IV.10 of \cite{haah2023nontrivial}. We thank J. Haah for pointing these results to us.}, which is unique up to a Clifford FDQC, according to the discussion in the concluding paragraph of Section \ref{sec:algebraicintro}. On the other hand, we can map a Clifford QCA to the image of the product Hamiltonian (the standard Lagrangian module) under this QCA. Therefore, there is a correspondence between stabilizer Hamiltonians generating Lagrangian modules modulo the action of Clifford circuits and Clifford QCA modulo Clifford circuits. According to \eqref{pi0LvsL/ESp} and the discussion above, the Lagrangian Grassmannian $\CL(R)$ modulo the action of Clifford circuits is isomorphic to the monoid of connected components of $\CL(R)$. Therefore, the study of phases of Clifford QCA boils down to the analysis of the connected components of the Lagrangian Grassmannian. 

Let us turn to the main subject of this paper, the loops of Clifford circuits that transform any Lagrangian submodule of the Pauli module to the same submodule. Loops of Clifford circuits correspond to loops of Lagrangians. Without loss of generality, we consider loops in the Lagrangian Grassmannian based on the distinguished point, the standard Lagrangian. We say that two loops are homotopy equivalent if there exists a two-parameter family of Lagrangians interpolating between the two loops. We shall demonstrate that the monoid of connected components of loops in the Lagrangian Grassmannian, modulo the action of generalized translations, is an abelian group. Due to the relation between QCA and Lagrangian submodules, we associate the latter group with the group of homotopy classes of loops of invertible states induced by Clifford circuits. 

\subsection{Classical Maslov index.}
The main tool of the next section is an algebraic generalization of the Maslov
index, which is well-known in symplectic geometry. Before we proceed to the algebraic version in the next section, we wish to demonstrate how the index works in the simplest case of a two-dimensional real phase space. As we pointed out in the introduction, the Pauli module is a direct analog of the phase space of Hamiltonian mechanics. The classical Maslov index of a loop of Lagrangian subspaces in a real phase space is a $\IZ$-valued invariant that can be thought of as the winding number of such a loop. 

Le us consider the setting of Hamiltonian mechanics: the Pauli algebra is $R=\IR$ and the Pauli module is a real phase space $\IR^{2n}$. It is a classical observation that $\mbox{U}_{n}/\mbox{O}_{n}$ acts freely and transitively on the space of all Lagrangian subspaces $\CL_{\IR^n}(\IR)$ of $\IR^{2n}$. Therefore, by picking a base-point, we obtain a diffeomorphism of manifolds
\begin{align}\label{diffeo}
\mbox{U}_{n}/\mbox{O}_{n}\xrightarrow{\sim} \CL_{\IR^n}(\IR)\,,\quad u\mapsto u\cdot (\IR^n\oplus 0)\,. 
\end{align}
Since the determinant of any matrix in $\mbox{O}_n$ is $\pm 1$, there is a well-defined map:
\begin{align}\label{detsq}
    \mbox{det}^2: \CL_{\IR^n}(\IR)\to \mbox{U}_1\,
\end{align}
whose fiber is the subspace of $\CL_{\IR^n}(\IR)$ corresponding to $(\mbox{det}\,u)^2=1$ under $\eqref{diffeo}$. Let us denote the fiber of \eqref{detsq} by $S\CL_{\IR^{n}}(\IR)$, then  we have a fibration 
\begin{align*}
    \mbox{SO}_n\to \mbox{SU}_n\to S\CL_{\IR^{n}}(\IR)
\end{align*}
The long exact sequence of homotopy groups for this fibration together with that for $S\CL_{\IR^{n}}(\IR) \to \CL_{\IR^n}(\IR) \to \mbox{U}_1$ gives us an isomorphism $\pi_1(\CL_{\IR^{n}}(\IR))\cong \IZ$.

The classical Maslov index is a tool calculating the degree of a loop of Lagrangians. Let us work out the simplest possible example of a loop of Lagrangians in the real two-dimensional phase space. The Lagrangians in $\IR^{2}$ are straight lines such that $\CL_{\IR}(\IR)\cong \mbox{U}_1/\mbox{O}_1 \cong \IR \mathbb{P}^1$. To make a connection with the generalized Maslov index considered in the next section, we restrict to loops parametrized by algebraic polynomials. As an example, we consider 
\begin{align}\label{maslovex}
    \alpha(T)=\begin{pmatrix}
        4(T+\frac{1}{\sqrt{2}})^3-6(T+\frac{1}{\sqrt{2}})^2+1\\
        12(T+\frac{1}{\sqrt{2}})^2-12(T+\frac{1}{\sqrt{2}})
    \end{pmatrix}\,,\quad T\in [0,1]\,.
\end{align}
The curve $\{\alpha(T)\,,\,T\in[0,1]\}\subset \IR^2$  defines a loop in $\CL_{\IR}(\IR)\cong \IR \mathbb{P}^1$ of degree one.

Let us describe an algebraic method of computing degrees of loops in $\CL_{\IR}(\IR)$ using Sturm functions \cite{BL,ghys2015signatures}. Let us consider $\alpha_P(T)$ of the form
$\scaleto{\begin{pmatrix}
    P(T)\\
    P'(T)
\end{pmatrix}}{20pt}$ where $P(T)$ is a polynomial of degree $m$ with simple roots and such that $P(0),P(1)\neq 0$. We use the Euclidean algorithm to divide $P(T)$ by $P'(T)$ which results in a sequence of residues $(q_1,\dots, q_m)$ where each element is a linear polynomial and $q_m\in \IR^{\times}$. One checks that the sequence of residues, also known as the Sturm sequence, linearizes  $\alpha_P(T)$ in the sense that
\begin{align*}
\begin{pmatrix}
    P(T)\\
    P'(T)
\end{pmatrix}=\begin{pmatrix}
    q_1 & -1\\
    1   & 0
\end{pmatrix}
\cdots
\begin{pmatrix}
    q_{m-1} & -1\\
    1   & 0
\end{pmatrix}\cdot
\begin{pmatrix}
    q_m \\
    0   
\end{pmatrix}\,.
\end{align*}
To this end, we define a bilinear form with the following matrix representation:
\begin{align*}
S=\begin{pmatrix}
q_1(T) & -1 & &  &  &  & & \\
-1 & q_2(T) & -1 & & &  & &\\
 & -1 &  &  &  & & &\\
 &  &  &  &  & & &\\
 &  &  &  &  & & &  -1\\
  &   &  &  &   & & -1 & q_m(T)  \\
\end{pmatrix}\,.
\end{align*}
Then, the degree of the loop parametrized by $\alpha_P(T)$ is given by $\frac{1}{2}\mbox{sign} (S(1)\oplus -S(0))$, where $\mbox{sign}$ is the signature of a bilinear form. Using this formula, we confirm that the loop defined  by \eqref{maslovex} indeed has degree one and this degree coincides with the Maslov index.

For the case of a real phase space, abstract algebraic loops of Lagrangians correspond to actual geometric loops like the ones considered above. In the next section we shall use an analog of the Maslov index for Lagrangian submodules inside the Pauli module. One should be aware that the Lagrangian Grassmannian for $R=\ifpd$ is not a manifold but a scheme, and all that we can do is put the Zariski topology on it. In this topology, the algebraic paths are actually continuous. 

 \section{Loops of lagrangians}\label{sec:loopsoflagrangians} 
Recall, a path of Lagrangians in $\LL(R)$ is an element in $\LL(R[T])$. The goal of this section is to analyze the space of loops in $\LL(R)$, which we denote by $\OLL(R)$:
\begin{align*}
    \OLL(R)\coloneqq \{\,\alpha \in \CL_{\scaleto{L[T]}{7pt}}(R[T])\;|\;\alpha(0)=\alpha(1)=\L\,\}\,.
\end{align*}

In a moment, we shall demonstrate a trick that shows that any loop of Lagrangians is homotopy-equivalent to the constant loop. One should be aware that the being homotopic within $\Omega \LL(R)$ is not the same as being homotopic within $\CL_{\scaleto{L[T]}{7pt}}(R[T])$. What we are going to show that for any loop $\alpha(T)\in\Omega \LL(R)\subseteq \CL_{\scaleto{L[T]}{7pt}}(R[T])$, there exists a homotopy to the constant loop $\L[T]$ within $\CL_{\scaleto{L[T]}{7pt}}(R[T])$. 

Let $g$ be a homomorphism which acts on polynomials by a substitution
\begin{align*}
g:R[T]\to R[T,\tau]\,,\quad   a(T)\mapsto a(T\tau)\,.
\end{align*} 
Then, the $R[T,\tau]$-module induced by $g$ from a loop $\alpha(T)\in \OLL(R)$ is given by $\aleph(T,\tau)=R[T,\tau]\otimes_g \alpha(T)$. Any such $\aleph(T,\tau)$ is a homotopy between $\alpha(T)$ and $L[T]$. Therefore, any $\alpha \in \OLL(R)$ satisfies Corollary \ref{Corollary:CLtoESp}. Using this observation and Proposition \ref{Prop:SturmSeq}, we obtain the following proposition. 
\begin{proposition}
\label{prop:Loopstoforms}
    Let $\alpha \in \Omega \CL_{L}(R)$, then there exists a free $R$-module $\L'$ and a $+$hermitian form $S(T)$ on $\L[T]\oplus L'[T]$ such that 
\begin{align}\label{Def:transversality}
    \alpha \oplus \L'[T] \pitchfork \begin{pmatrix}
        1 & 0 \\
        S(T) & 1\\
    \end{pmatrix}\cdot (\L[T]\oplus L'[T])\,.
\end{align}
\end{proposition}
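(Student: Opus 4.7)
The plan is to follow the reduction already sketched in the paragraph preceding the statement: the rescaling trick $T \mapsto T\tau$ turns every loop into a path starting at the basepoint inside $\CL_{L[T]}(R[T])$, which lets us invoke Corollary \ref{Corollary:CLtoESp} over the extended base ring $R[T]$, and Proposition \ref{Prop:SturmSeq} then constructs the required transversal.

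Concretely, let $g\colon R[T]\to R[T,\tau]$ be the substitution $a(T)\mapsto a(T\tau)$ and set $\aleph=R[T,\tau]\otimes_g\alpha$. Evaluating at $\tau=0$ yields the constant loop $L[T]$, while $\tau=1$ yields $\alpha$. Interpreting $\aleph$ as a path in $\CL_{L[T]}(R[T])$ over the base ring $R[T]$ (which is regular commutative with involution and $2$ invertible), Corollary \ref{Corollary:CLtoESp} furnishes a free $R$-module $L''$ and an elementary $\lm$-unitary $\Psi\in\ehp^{-}(\L[T]\oplus L''[T];R[T])$ with
\begin{align*}
\alpha \oplus L''[T] \;=\; \Psi\cdot(L[T]\oplus L''[T]) \;\in\; \CL(R[T])\,.
\end{align*}

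Next, decompose $\Psi$ as a product of generators from \eqref{def:elun}, producing a Sturm sequence of $+$hermitian forms over $R[T]$ on $\L[T]\oplus L''[T]$. Inserting trivial factors $E_k(0)=1$ at either end if necessary, we may arrange the decomposition so that $\underline{q}=(q_0,\dots,q_{2n})$ has type $(0,2n)$. Applying Proposition \ref{Prop:SturmSeq} with the truncated sequence $\underline{q}'=(q_0,\dots,q_{2n-1})$ and with base module $\L\oplus L''$ over $R[T]$ then gives
\begin{align*}
\begin{pmatrix} 1 & 0 \\ S(\underline{q}') & 1 \end{pmatrix}\cdot\bigl((\L\oplus L'')[T]\oplus L_{1,2n-1}\bigr) \;\pitchfork\; (\alpha\oplus L''[T])\oplus L_{1,2n-1}\,,
\end{align*}
where $L_{1,2n-1}$ is a direct sum of alternating copies of $\L\oplus L''$ and its dual, hence a free $R$-module. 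Setting $\L'=L''\oplus L_{1,2n-1}$ and $S(T)=S(\underline{q}')$ yields the stated transversality.

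The main bookkeeping concern is ensuring that all ancillae assemble into a free $R$-module rather than merely a free $R[T]$-module. This holds because both Corollary \ref{Corollary:CLtoESp} and Proposition \ref{Prop:SturmSeq} explicitly produce their ancillae as free modules over $R$ (the former via homotopy invariance of $K_0$, the latter by construction of the alternating sum $L_{1,2n-1}$). The only other subtlety is the parity adjustment of the Sturm sequence, which is harmless because padding with zero forms does not alter the represented elementary $\lm$-unitary. No new input beyond the already-established lemmas is required.
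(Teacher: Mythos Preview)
Your proof is correct and follows precisely the paper's own argument: the text immediately preceding the proposition already spells out that the rescaling trick $\aleph(T,\tau)$ shows $\alpha$ satisfies Corollary~\ref{Corollary:CLtoESp}, and then one applies Proposition~\ref{Prop:SturmSeq} over $R[T]$ to produce the transversal. One small remark: since $\alpha(0)=L$ holds by definition of a loop, Corollary~\ref{Corollary:CLtoESp} already applies directly over the base ring $R$ (with $T$ as the path parameter), so the detour through $R[T]$ as base ring and the implicit evaluation at $\tau=1$ are not strictly needed---but this matches the paper's exposition and does no harm.
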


Let us evaluate \eqref{Def:transversality} at $0$ and at $1$: we obtain a pair of relations 
\begin{align}\label{Def:transversalityeval}
    \L \oplus \L' \pitchfork \begin{pmatrix}
        1 & 0 \\
        S(0) & 1\\
    \end{pmatrix}\cdot (\L\oplus \L')\,,
    \quad \quad \quad
    \L \oplus \L' \pitchfork \begin{pmatrix}
        1 & 0 \\
        S(1) & 1\\
    \end{pmatrix}\cdot (\L\oplus \L')\,,
\end{align}
which imply that $S(0)$ and $S(1)$ are non-degenerate $+$hermitian forms. The assignment of $S(T)$ to a loop of Lagrangians is highly non-unique. We aim to show that the homotopy class of the form $S(0)\oplus -S(1)^{-1}$ completely determines the homotopy class of the loop. Before we discuss a homotopy invariant of a loop, let us introduce some definitions and prove auxiliary lemmas.

Given a free $R$-module $\L$, we denote the {\bf set of non-degenerate $+$hermitian forms on $\L\oplus \L^{*}$} by $\FL(R)$, which is a pointed set with the base point $\lambda ^{+}$. Similarly to the paths of Lagrangians we define paths of $+$hermitian forms in $\FL(R)$. We say that two forms $\phi,\psi \in \CF_{L}(R)$ are homotopy-equivalent, which we denote by $\phi \sim_h \psi$, if there exists $\Gamma(T) \in \FL(R[T])$ such that $\Gamma(0)=\phi$ and $\Gamma(1)=\psi$. The set of of connected components of $\FL(R)$ is denoted by $\pi_0 \FL(R)$.

The following lemma will be useful for comparing two $+$hermitian forms satisfying \eqref{Def:transversality} for the same loop of Lagrangians.
\begin{lemma}
    Let $\alpha \in \Omega \CL_{L}(R)$, $\L'$ be a free $R$-module and $S_1(T)$, $S_2(T)$ be a pair of $+$hermitian forms on $\L[T]\oplus \L'[T]$ such that 
    \begin{align}\label{pairtransverse}
        \alpha \oplus \L'[T] \pitchfork
        \begin{pmatrix}
            1 & 0\\
            S_l(T) & 1
        \end{pmatrix}\cdot (\L[T]\oplus \L'[T])\,,\quad l=1,2\,.
\end{align}
Then, 
\begin{align}\label{Eq:TrivSamePoint}
    S_2(0)\oplus -S_1(0)^{-1}\sim_h S_2(1)\oplus -S_1(1)^{-1} \,.
\end{align}\end{lemma}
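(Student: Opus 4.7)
The plan is to construct an explicit path $\Theta(T)$ of non-degenerate $+$hermitian forms in $\FL(R[T])$ (on a stabilization of $V[T]\oplus V[T]^*$, where $V=\L\oplus\L'$) connecting $S_2(0)\oplus -S_1(0)^{-1}$ to $S_2(1)\oplus -S_1(1)^{-1}$. Set $M_l(T):=E_0(S_l(T))\cdot V[T]$ and $N(T):=\alpha(T)\oplus\L'[T]$. The two transversalities of \eqref{pairtransverse} together say that $N(T)$ is a Lagrangian complement to both $M_1(T)$ and $M_2(T)$ inside $\hm(V)[T]$ for every $T$, and it is this double transversality that will eventually provide the non-degeneracy of $\Theta(T)$ throughout $T\in[0,1]$.

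To make the construction explicit, I would first parameterize $N(T)$ by a pair of $R[T]$-module maps $A(T):V[T]\to V[T]$ and $B(T):V[T]\to V[T]^*$ with $\begin{pmatrix} A \\ B \end{pmatrix}$ of full rank and $A^*B$ hermitian (which encodes isotropy of $N$). Unpacking $M_l(T)\pitchfork N(T)$, one finds that this is equivalent to the $R[T]$-module map $F_l(T):=B(T)-S_l(T)A(T):V[T]\to V[T]^*$ being an isomorphism. In particular, at $T=0,1$ one has $\alpha(T)=\L$ so one may take $A=\mathrm{id}_V$, $B=0$, and then $F_l(T)=-S_l(T)$, which is invertible by the non-degeneracy of $S_l$ at the endpoints inherited from \eqref{Def:transversalityeval}.

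Using the globally invertible $F_1(T)$ and $F_2(T)$, I would build $\Theta(T)$ as a block expression in $S_l(T),A(T),B(T)$ whose determinant is, up to sign, the product $\det F_1(T)\cdot\det F_2(T)$, hence a unit in $R[T]$. At $T=0,1$, the form $\Theta(T)$ should reduce, after a polynomial-in-auxiliary-parameter unipotent change of basis of the form $\begin{pmatrix} 1 & -sS_l(T)^{-1} \\ 0 & 1 \end{pmatrix}$ (which requires only $S_l(T)^{-1}$ at the endpoints, where it exists), to the diagonal form $S_2(T)\oplus -S_1(T)^{-1}$. Concatenating the interior path $\Theta(T)$ with the two endpoint homotopies then furnishes the required homotopy in $\FL$.

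The main obstacle is producing the correct closed-form expression for $\Theta(T)$: the naive block matrix $\begin{pmatrix} S_2(T) & 1 \\ 1 & -S_1(T)^{-1} \end{pmatrix}$ is not defined over $R[T]$, since $S_1(T)^{-1}$ need not exist for general $T$, so the construction must route the non-degeneracy through the globally invertible $F_l(T)$ provided by transversality with $\alpha(T)$. This is precisely where the hypothesis on $\alpha$ enters essentially; the construction is in the spirit of the Wall--Kashiwara Maslov-type form associated to the three Lagrangians $(M_1(T),M_2(T),N(T))$, transported to the present algebraic Hermitian $\K$-theory setting.
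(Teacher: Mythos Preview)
Your strategy is exactly the paper's: use the double transversality \eqref{pairtransverse} to manufacture a path of non-degenerate $+$hermitian forms, then reduce at the endpoints (where $S_l$ is invertible) to the diagonal $S_2\oplus -S_1^{-1}$ via unipotent conjugation. What is missing is the actual construction of $\Theta(T)$; you only assert that a block expression with determinant $\det F_1\cdot\det F_2$ exists. Until that matrix is on the page and checked to be hermitian and non-degenerate over $R[T]$, the argument is a plan, not a proof. A second loose end: parametrizing $N(T)$ by $(A(T),B(T))$ presupposes that $\alpha\oplus\L'[T]$ is free over $R[T]$, which you should justify (e.g.\ via the homotopy invariance used earlier, or by invoking Corollary~\ref{cor:stabsturm}).

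The paper closes both gaps in one stroke by first normalizing your $F_1$ to the identity: Corollary~\ref{cor:stabsturm} produces a hermitian $Y_1(T)$ with
\[
\alpha\oplus\L'[T]=E_0(S_1(T))\,E_1(Y_1(T))\cdot(\L[T]\oplus\L'[T])^*,
\]
which in your language is $A=Y_1$, $B=S_1Y_1+1$, so $F_1=1$ and $F_2=(S_1-S_2)Y_1+1$. The explicit path is then
\[
\Theta(T)=\begin{pmatrix} S_2(T)-S_1(T) & 1\\ 1 & -Y_1(T)\end{pmatrix},
\]
non-degenerate because its ``determinant'' is precisely $F_2(T)$. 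At $T=0,1$ the loop condition forces $Y_1=-S_1^{-1}$, and the unipotent conjugation you describe (with $s\,S_1^{-1}$) diagonalizes $\Theta$ to $S_2\oplus -S_1^{-1}$. This is exactly the Wall--Kashiwara-type form you allude to, but made concrete; adopting the normalization via Corollary~\ref{cor:stabsturm} is the cleanest way to finish your outline.
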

\begin{proof}
Let us use Corollary \ref{cor:stabsturm} for $\alpha[T]$: there must exist a pair of $+$hermitian forms $Y_1(T)$ and $Y_2(T)$ on $\L[T]\oplus \L'[T]$ such that 
\begin{align}
\label{LoopEquiv}
    \alpha\oplus \L'[T]=
    \begin{pmatrix}
         1 &  0\\
         S_1(T) & 1
    \end{pmatrix}
        \begin{pmatrix}
         1 & Y_1(T)\\
         0 & 1
    \end{pmatrix}
    \cdot (\L[T]\oplus \L'[T])^*\,,\quad l=1,2\,.
\end{align}

Using \eqref{LoopEquiv} and \eqref{pairtransverse} with $l=2$, we obtain 
\begin{align*}
    \begin{pmatrix}
         1 & 0 \\
         -S_2(T) & 1
    \end{pmatrix}
   \begin{pmatrix}
         1 &  0\\
         S_1(T) & 1
    \end{pmatrix}
        \begin{pmatrix}
         1 & Y_1(T)\\
         0 & 1
    \end{pmatrix}
    \cdot (\L[T]\oplus \L'[T])^* \pitchfork (\L[T]\oplus \L'[T])\,
\end{align*}
by inverting $\scaleto{\begin{pmatrix}
    1 & 0 \\
    S_2(T) & 1
\end{pmatrix}}{20pt}$. It implies that $(S_1(T)-S_2(T))Y_1(T)+1$ is non-degenerate. We use this form to construct another non-degenerate form
\begin{align*}
      \begin{pmatrix}
         S_2(T)-S_1(T) & 1 \\
         1 & -Y_1(T)
    \end{pmatrix} \,.
\end{align*}
Indeed, $(S_1(T)-S_2(T))Y_1(T)+1$ is the determinant of such and it is invertible. We use this form to demonstrate that there is a homotopy
\begin{align*}
      \begin{pmatrix}
         S_2(0)-S_1(0) & 1 \\
         1 & Y_1(0)
    \end{pmatrix} \sim_h 
    \begin{pmatrix}
         S_2(1)-S_1(1) & 1 \\
         1 & Y_1(1)
    \end{pmatrix}\,.
\end{align*}
If we evaluate \eqref{LoopEquiv} at $0$ and $1$, we find that $Y_l(0)$ and $Y_l(1)$ with $l=1,2$ are non-degenerate and $Y_l(0)S_l(0)+1=0$, $Y_l(1)S_l(1)+1=0$ for $l=1,2$. Thus,
\begin{align*}
\begin{pmatrix}
         S_2(0)-S_1(0) & 1 \\
         1 & -S_1(0)^{-1}
    \end{pmatrix}
       \sim_h 
     \begin{pmatrix}
         S_2(1)-S_1(1) & 1 \\
         1 & -S_1^{-1}(1)
    \end{pmatrix}\,.
\end{align*}
Further, we use the following forms in $\FL(R[T])$
\begin{align*}
    &\begin{pmatrix}
        1 & TS_1(0)\\
        0 & 1
    \end{pmatrix}
        \begin{pmatrix}
        S_2(0)-S_1(0) & 1\\
        1 & -S_1^{-1}(0)
    \end{pmatrix}
        \begin{pmatrix}
        1 & 0\\
        TS_1(0) & 1
    \end{pmatrix}\,,\\
     &\begin{pmatrix}
        1 & TS_1(1)\\
        0 & 1
    \end{pmatrix}
        \begin{pmatrix}
        S_2(1)-S_1(1) & 1\\
        1 & -S_1^{-1}(1)
    \end{pmatrix}
        \begin{pmatrix}
        1 & 0\\
        TS_1(1) & 1
    \end{pmatrix}
\end{align*}
which demonstrate that 
\begin{align*}
    \begin{pmatrix}
        S_2(0)-S_1(0) & 1\\
        1 & -S_1^{-1}(0)
    \end{pmatrix}&\sim_h 
    \begin{pmatrix}
        S_2(0) & 0\\
        0 & -S_1^{-1}(0)
    \end{pmatrix}\,,\\
      \begin{pmatrix}
        S_2(1)-S_1(1) & 1\\
        1 & -S_1^{-1}(1)
    \end{pmatrix}&\sim_h \begin{pmatrix}
        S_2(1) & 0\\
        0 & -S_1^{-1}(1)
    \end{pmatrix}\,.
\end{align*}
Finally, we conclude that
\begin{align*}
        \begin{pmatrix}
        S_2(0) & 0\\
        0 & -S_1^{-1}(0)
    \end{pmatrix}\sim_h
    \begin{pmatrix}
        S_2(1) & 0\\
        0 & -S_1^{-1}(1)
    \end{pmatrix}\,.
\end{align*}
\end{proof}
\begin{lemma}
    Let $q$ be a non-degenerate $+$hermitian form on a free $R$-module $\L$. Then, 
    \begin{align}\label{Eq:TrivMas}
        q\oplus -q^{-1}\sim_h \l^{+} \,.
    \end{align}
\end{lemma}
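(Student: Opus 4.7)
My plan is to exhibit an explicit path $\Gamma(T) \in \FL(R[T])$ from $\l^+$ to $q \oplus -q^{-1}$. The strategy is to write $q \oplus -q^{-1} = g^* \l^+ g$ for some $g \in \gl(\L \oplus \L^*; R)$ that is a product of elementary (unipotent) matrices, and then build $\Gamma(T)$ by conjugating $\l^+$ by a polynomial path $g(T)$ interpolating between the identity and $g$. Since each unipotent elementary matrix comes with a canonical straight-line path to the identity, this manufactures a homotopy essentially for free.

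First I would produce $g$ using the standard polarization trick, which is available precisely because $2 \in R^\times$. The elementary transformation $g_1 = \scaleto{\begin{pmatrix} 1 & 0 \\ q/2 & 1 \end{pmatrix}}{18pt}$ satisfies $g_1^* \l^+ g_1 = \scaleto{\begin{pmatrix} q & 1 \\ 1 & 0 \end{pmatrix}}{18pt}$, and then $g_2 = \scaleto{\begin{pmatrix} 1 & -q^{-1} \\ 0 & 1 \end{pmatrix}}{18pt}$ diagonalizes the result, giving $g_2^* \scaleto{\begin{pmatrix} q & 1 \\ 1 & 0 \end{pmatrix}}{18pt} g_2 = q \oplus -q^{-1}$. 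So $g = g_1 g_2$ does the job, using $q^* = q$ and $(q^{-1})^* = q^{-1}$ at each stage.

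Next I would introduce the parameter $T$ into the off-diagonal entries:
\begin{align*}
g(T) = \begin{pmatrix} 1 & 0 \\ Tq/2 & 1 \end{pmatrix}\begin{pmatrix} 1 & -Tq^{-1} \\ 0 & 1 \end{pmatrix} \in \gl(\L[T]\oplus \L^*[T]; R[T]),
\end{align*}
so $g(0) = 1$ and $g(1) = g$; both factors are unipotent and hence invertible over $R[T]$, so $g(T)$ is as well. Setting $\Gamma(T) = g(T)^* \l^+ g(T)$ then defines a $+$hermitian form over $R[T]$ that is non-degenerate at every $T$ simply because it is the conjugate of the non-degenerate form $\l^+$ by an invertible matrix. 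A direct multiplication yields
\begin{align*}
\Gamma(T) = \begin{pmatrix} Tq & 1-T^2 \\ 1-T^2 & (T^3-2T)\,q^{-1} \end{pmatrix},
\end{align*}
with $\Gamma(0) = \l^+$ and $\Gamma(1) = q \oplus -q^{-1}$ as required; as a sanity check, its determinant is $Tq\cdot(T^3-2T)q^{-1} - (1-T^2)^2 = -1$.

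I do not anticipate any serious obstacle. The only nontrivial ingredient is the hypothesis $2 \in R^\times$, which is what makes the factor $q/2$ available and thus allows the hyperbolic form $\l^+$ to be diagonalized by unipotent transformations alone. Everything else reduces to bookkeeping with elementary $\lm$-unitaries — exactly the kind of manipulation already systematized in Section \ref{sec:pathsoflagrnagians}.
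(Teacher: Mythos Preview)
Your proof is correct and is essentially the same as the paper's: both construct the homotopy by conjugating with the pair of unipotent matrices $\scaleto{\begin{pmatrix} 1 & 0 \\ q/2 & 1 \end{pmatrix}}{18pt}$ and $\scaleto{\begin{pmatrix} 1 & -q^{-1} \\ 0 & 1 \end{pmatrix}}{18pt}$ scaled by $T$, relying on $2\in R^\times$. The only cosmetic difference is the direction---the paper conjugates $q\oplus -q^{-1}$ by the inverse path to reach $\l^+$ at $T=1$, whereas you conjugate $\l^+$ to reach $q\oplus -q^{-1}$.
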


\begin{proof} The proof essentially relies on the fact that $2$ is invertible in $R$. The following form delivers a concrete homotopy
\begin{align*}
    \begin{pmatrix}
        1 & -T\frac{q}{2}\\
        0 & 1 
    \end{pmatrix}
    \begin{pmatrix}
        1 & 0\\
        Tq^{-1} & 1 
    \end{pmatrix}
     \begin{pmatrix}
        q & 0\\
        0 & -q^{-1} 
    \end{pmatrix}
        \begin{pmatrix}
        1 & Tq^{-1}\\
        0 & 1 
    \end{pmatrix}
    \begin{pmatrix}
        1 & 0\\
        -T\frac{q}{2} & 1 
    \end{pmatrix}\,.
\end{align*}
\end{proof}

The previous lemma characterizes the connected component of the base-point in $\FL(R)$. Let us define the stable space $\CF(R)$ as the direct limit of inclusions of pointed sets
\begin{align*}
    \CF_{\scaleto{R^n}{6.5
    pt}}&\longrightarrow \CF_{\scaleto{R^{n+1}}{7pt}}\,,\\
    \phi &\longmapsto \phi \oplus \l^{+}\,.
\end{align*}
 Such a defined $\CF(R)$ is an abelian monoid.
Given a $\phi \in \FL(R)$ for some free $R$-module $\L$, its image in $\pi_0\CF(R)$ is denoted by $\st[\phi]$.

\subsection{Maslov index} Let $\alpha\in \Omega \LL(R)$ be a loop of Lagrangians and let $\L'$ and $S(T)$ be some free $R$-module and some $+$hermitian form respectively satisfying Proposition \ref{prop:Loopstoforms}, then we define the {\bf Maslov index of a loop} as 
\begin{align*}
    \mas(\alpha)\coloneqq \st [S(1)\oplus -S(0)^{-1}]\,
\end{align*}

\begin{theorem}\label{Thm:main}

The Maslov index does not depend on the choice of $S$ and $\L'$. There is an isomorphism of abelian monoids
\begin{align}\label{eq:MainThmStatement}
    \pi_0\Omega\CL(R)\cong \pi_0 \CF(R)\,.
\end{align}
\end{theorem}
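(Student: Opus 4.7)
I would split the proof into four parts: well-definedness of Mas, the monoid-homomorphism property, surjectivity, and injectivity, with injectivity being the main obstacle.

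For well-definedness, suppose both $(\L_1',S_1)$ and $(\L_2',S_2)$ witness the same loop $\alpha$ via Proposition \ref{prop:Loopstoforms}. After stabilizing to a common auxiliary $\L' = \L_1'\oplus \L_2'$ (extending each $S_i$ by $\lambda^+$ on the missing summand) the preceding Lemma gives
\begin{equation*}
S_2(0)\oplus -S_1(0)^{-1}\;\sim_h\; S_2(1)\oplus -S_1(1)^{-1}.
\end{equation*}
Direct-summing both sides with the form $S_1(1)\oplus -S_2(0)^{-1}$ and collapsing pairs of the form $q\oplus -q^{-1}$ to $\lambda^+$ via the other preceding Lemma yields $\st[S_1(1)\oplus -S_1(0)^{-1}]=\st[S_2(1)\oplus -S_2(0)^{-1}]$ in $\pi_0\CF(R)$. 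Invariance under homotopies of the loop $\alpha$ follows by applying the same bookkeeping to a witness in $R[T,U]$. The homomorphism property is then immediate: the direct sum $\alpha\oplus\beta$ is witnessed by $S_\alpha\oplus S_\beta$, and the constant loop is witnessed by $S\equiv \lambda^+$, giving Maslov index $\st[\lambda^+]=0$.

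For surjectivity, given a non-degenerate $+$hermitian form $\phi$ on a free module $M$, I would construct a loop realizing $\st[\phi]$ as follows. Pick a path $S(T)$ of non-degenerate forms on a suitably enlarged module such that $S(1)\oplus -S(0)^{-1}$ is stably equivalent to $\phi$ (one can, for example, enlarge by $M^*$ and use block forms built from $\phi$ and $-\phi^{-1}$ so that non-degeneracy holds throughout). Then interpolate a polynomial path $Y(T)$ of hermitian forms on the dual with boundary values $Y(i)=-S(i)^{-1}$, which forces the Lagrangian
\begin{equation*}
\alpha(T)=\begin{pmatrix}1 & 0\\ S(T) & 1\end{pmatrix}\begin{pmatrix}1 & Y(T)\\ 0 & 1\end{pmatrix}\cdot (L\oplus L')^{*}
\end{equation*}
to satisfy $\alpha(0)=\alpha(1)=L\oplus L'$; by construction $\alpha\pitchfork$ graph of $S(T)$ for every $T$, so $(S,L')$ witnesses $\alpha$ via Proposition \ref{prop:Loopstoforms}, giving $\text{Mas}(\alpha)=\st[\phi]$.

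The hardest step is injectivity. Given $\text{Mas}(\alpha)=0$, one has a null-homotopy $\Gamma(T,U)\in \CF(R[T,U])$ of non-degenerate forms with $\Gamma(\cdot,0)=S(1)\oplus -S(0)^{-1}\oplus\lambda^+\oplus\cdots$ and $\Gamma(\cdot,1)=\lambda^+$. The plan is to promote the surjectivity construction to a two-variable version: lift $\Gamma$ to a two-parameter family $\widetilde S(T,U)$ of witness paths (matching $S$ at $U=0$ and $\lambda^+$ at $U=1$), combine with a compatible two-parameter interpolation $\widetilde Y(T,U)$ whose boundary values at $T=0,1$ are forced by the endpoint conditions in $\Gamma$, and thereby assemble a family $\aleph(T,U)\in \CL(R[T,U])$ with $\aleph(T,0)=\alpha(T)\oplus L'[T]$, $\aleph(T,1)=L\oplus L'$, and $\aleph(0,U)=\aleph(1,U)=L\oplus L'$ for every $U$. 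The subtlety I expect to struggle with is ensuring non-degeneracy at every $(T,U)$ while preserving the closure conditions; this is where the Sturm-sequence machinery of Section \ref{sec:pathsoflagrnagians}, applied over $R[T,U]$ together with the homotopy invariance of $\umod^{-}$ for regular rings with $2$ invertible, should provide the required flexibility.
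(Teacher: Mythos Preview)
Your well-definedness argument is essentially the paper's: stabilize to a common auxiliary, apply the two preceding lemmas, and collapse. Your surjectivity construction is also essentially the paper's map $l$, though note that you do \emph{not} need $S(T)$ to be non-degenerate for intermediate $T$; only $S(0)$ and $S(1)$ must be invertible (this follows automatically from the endpoint condition $\alpha(i)=L$). The paper simply takes the affine interpolation $S(T)=(1-T)q_0+Tq_1$ and $Y(T)=(T-1)q_0^{-1}-Tq_1^{-1}$, calling the resulting loop $\hat l(q_0,q_1)$, and then sets $l(\phi)=\hat l(\lambda^+,\phi)$.

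The substantive divergence is in injectivity, and here your plan has a genuine gap. You propose to lift a null-homotopy $\Gamma(U)$ of $S(1)\oplus -S(0)^{-1}$ to a two-parameter witness $(\widetilde S,\widetilde Y)$ with $\widetilde S(\cdot,0)=S$ and $\widetilde S(\cdot,1)=\lambda^+$. But this forces $\widetilde S(i,U)$ to be a path of \emph{non-degenerate} forms from $S(i)$ to $\lambda^+$ for $i=0,1$, and no such paths need exist: take for instance $S(0)=S(1)=q$ with $[q]\neq[\lambda^+]$ in $\pi_0\CF(R)$, which still has trivial Maslov index. More generally, there is no canonical way to split the homotopy $\Gamma$ back into separate homotopies of $S(0)$ and $S(1)$, and the Sturm-sequence machinery over $R[T,U]$ does not furnish one.

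The paper sidesteps this entirely. Rather than lift $\Gamma$, it shows that the section $l$ is \emph{surjective} on $\pi_0$ and then uses $\mathrm{Mas}\circ l=\mathrm{id}$ to conclude bijectivity. The surjectivity of $l$ comes from the representation already obtained in the proof of the comparison lemma: any loop stably has the form
\[
\alpha(T)=\begin{pmatrix}1&0\\ S(T)&1\end{pmatrix}\begin{pmatrix}1&Y(T)\\ 0&1\end{pmatrix}\cdot L^*,\qquad Y(i)=-S(i)^{-1}.
\]
Now linearly interpolate $S$ and $Y$ (in a second parameter $\tau$) toward their own affine interpolants $(1-T)S(0)+TS(1)$ and $(T-1)S(0)^{-1}-TS(1)^{-1}$, keeping the endpoint values fixed; this is a homotopy of loops ending at $\hat l(S(0),S(1))$. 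A second elementary homotopy gives $[\hat l(q_0,q_1)]=[\hat l(\lambda^+,q_1\oplus -q_0^{-1})]=[l(q_1\oplus -q_0^{-1})]$, so every loop lies in the image of $l$. Since $\mathrm{Mas}\circ l=\mathrm{id}$ (using $-\lambda^+\sim_h\lambda^+$), $l$ is a two-sided inverse. I would replace your injectivity sketch with this argument.
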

\begin{proof}
   For the first part, assume $S_1(T)$ and $S_2(T)$ are two non-degenerate forms satisfying condition \eqref{Def:transversality}. We have the following chain of equalities within $\pi_0 \CF(R)$:
\begin{multline}
    \st[S_1(1)\oplus -S_1(0)^{-1}] \stackrel{(\ref{Eq:TrivMas})}{=} \st[S_1(1)\oplus -S_1(0)^{-1}\oplus S_2(0)\oplus -S_2(0)^{-1}]=\\
    \st[S_1(1)\oplus -S_2(0)^{-1}\oplus S_2(0)\oplus -S_1(0)^{-1}]\stackrel{(\ref{Eq:TrivSamePoint})}{=}
    \st[S_2(1)\oplus -S_2(0)^{-1}\oplus S_1(1)\oplus -S_1(1)^{-1}]\stackrel{(\ref{Eq:TrivMas})}{=}\\
    \st[S_2(1)\oplus -S_2(0)^{-1}]
\end{multline}
Thus, any $+$hermitian form on $\L[T]\oplus \L'[T]$ that satisfy \eqref{Def:transversality} gives the same Maslov index. 
Further we notice that adding a constant form does not change the Maslov index. Assume that
\begin{align}
    \alpha \oplus \L'\oplus \L'' \pitchfork \begin{pmatrix}
        1 & 0 \\
        S(T)\oplus q & 1\\
    \end{pmatrix}\cdot (\L\oplus L'\oplus \L'')\,,
\end{align}
then $\st[S(1)\oplus q \oplus - S(0)^{-1}\oplus -q]=\st[S(1) \oplus S(0)^{-1}]$ according to Eq. \eqref{Eq:TrivMas}. We conclude that the Maslov index is well-defined. 

Next, let us demonstrate that the Maslov index is only sensitive to the homotopy class of loops of Lagrangians. Let $\aleph(T,\tau)$ be a homotopy between loops $\alpha(T),\beta(T)\in \Omega\CL_L(R)$. In other words, $\aleph(T,\tau)$ is an element of $\CL_{L}(R[T,\tau])$ such that $\aleph(T,0)=\alpha(T)$ and $\aleph(T,1)=\beta(T)$, while $\aleph(0,\tau)=\aleph(1,\tau)=\L[\tau]$. Such a homotopy also satisfies Proposition \ref{Prop:SturmSeq} and there exists a free $R$-module $\L'$ and a $+$hermitian form $S(T,\tau)$ such that 
\begin{align}
    \aleph(T,\tau)\oplus \L'[T,\tau]\pitchfork \begin{pmatrix}
        1 & 0\\
        S(T,\tau) & 1
    \end{pmatrix}\cdot (\L[T,\tau]\oplus \L'[T,\tau])\,.
\end{align}
We notice that $S(0,\tau)$ and $S(1,\tau)$ are non-degenerate and we use
\begin{align*}
    S(1,\tau)\oplus -S(0,\tau)^{-1}
\end{align*}
as the homotopy between $\mas(\alpha)$ and $\mas(\beta)$. This demonstrates that the following diagram commutes     
\begin{center}
\begin{tikzcd}[column sep=scriptsize]
\Omega \LL(R) \arrow[dr] \arrow[rr, "\scaleto{\mas}{6pt}" ]
    & & \pi_0\CF(R) \\
& \pi_0\Omega \LL(R)  \arrow[ur]
\end{tikzcd}
\end{center}
     So far, we have shown that $\mas$ induces a map of sets $\pi_0\Omega\LL(R)\to \pi_0\CF(R)$, and, it is straightforward to check that the Maslov index induces a homomorphism of abelian monoids $\pi_0\Omega \CL(R)\to \pi_0\CF(R)$.

In order to construct the inverse, let us introduce a loop of Lagrangians parametrized by two non-degenerate $+$hermitian forms. Let $q_0,q_1\in \sl$, then we define the following loop 
\begin{align}\label{eq:loopparametrization}
    \hat l(q_0,q_1)=\begin{pmatrix} 1 & 0 \\ (1-T)q_0+Tq_1 & 1
        \end{pmatrix}
        \begin{pmatrix} 1 & (T-1)q_0^{-1}-Tq_1^{-1} \\ 0 & 1
        \end{pmatrix}\cdot L^*\in \Omega\LL(R)\,.
\end{align}
In particular, we have a map
\begin{align}
    l:\FL(R)\to \Omega \CL_{\scaleto{\L\oplus \L^*}{7pt}}(R)\,,\quad \phi \mapsto l(\phi)=\hat l(\l^{+},\phi)\,,
\end{align}
which preserves the base point: $\hat l(\l^{+},\l^{+})$ is the constant loop $\L[T]\oplus \L^*[T]$. Moreover, if $\chi(\tau)\in \FL(R[\tau])$ is the homotopy between $\phi$ and $\psi$, then $l(\chi(\tau))$ is the homotopy between $l(\phi)$ and $l(\psi)$. Further, if $\phi\in \FL(R)$ and $\psi\in \CF_{\scaleto{\L'}{6pt}}(R)$, then $l(\phi\oplus \psi)$ is the direct sum of loops in $\CL_{\scaleto{\L\oplus \L^*}{7pt}}(R)\oplus\CL_{\scaleto{\L'\oplus \L'^*}{7pt}}(R)$ (we assume that both $\L$ and $\L'$ are free). Therefore, $l$ induces a homomorphism of abelian monoids
\begin{align}
\label{Def:OppositetoMas}
    l: \pi_0 \CF(R)\to \pi_0 \Omega \CL(R)\,.
\end{align}

Note that $\mas (\hat l(q_0,q_1))=\st[q_1\oplus -q_0^{-1}]$. Therefore, any element in $\pi_0\Omega \LL(R)$ can be presented by the class of $\hat{l}(q_0,q_1)$ for some $q_0,q_1\in \sl$. Let us denote the class of $\alpha\in \Omega \LL(R)$ within $\pi_0\Omega \CL(R)$ by $\st[\alpha]$. We have the following chain of equalities:
\begin{align}
    \st[\,\hat{l}(q_0,q_1)]=\st[\,\hat l(q_0\oplus q_0^{-1},q_1\oplus -q_0^{-1})]=\st[\,\hat {l}(\l^{+},q_1\oplus -q_0^{-1})]=\st[\,l(q_1\oplus -q_0^{-1})]
\end{align}
which implies that \eqref{Def:OppositetoMas} is surjective. To this end, let us check that $\mas$ and $l$ are inverse to each other. We observe that $\l^{+}\sim_h -\l^{+}$ through the following homotopy
\begin{align}\label{def:homLptoLm}
    E_0(-T/2)\,E_1(T)\,E_0(-T)\,E_1(T/2)\,\l^{+}\,E_0(T/2)\,E_1(-T)\,E_0(T)\,E_1(-T/2)
\end{align}
Thus, $\mas \circ l(\phi)=\st[\phi\oplus -\l^{+}]=\st[\phi]$, while $l \circ \mas$ is identity. 
\end{proof}

This theorem can be seen as an incarnation of the fundamental theorem of hermitan K-theory \cite{Karoubi1980}, and it is crucial for the further discussion.

\subsection{More on \texorpdfstring{$\pi_0\CF(R)$}{Lg}}
The right hand side of \eqref{eq:MainThmStatement} is a cryptic abelian monoid and in this section we aim to give it more algebraic description. Recall the function $\hat l$ that we used for the proof of Theorem \ref{Thm:main}: it takes a pair of non-degenerate $+$hermitian forms as an input. We shall demonstrate that $(\pi_0 \CF)(R)$ can be related to a more tractable monoid consisting of such pairs of $+$hermitian subject to certain algebraic relations.

Let $\L$ be a free $R$-module and $q$ be a non-degenerate $+$hermitian form on $\L$; an isomorphism between the pairs $(\L;q)$ and $(\L';q')$ is an isomorphism of modules $a: \L\to \L'$ such that $q=a^{*}\circ q'\circ a$. We denote the isomorphism class of $(\L;q)$ by $[\L;q]$ and by $Q^{+lib}(R)$ we denote the abelian monoid of isomorphism classes of pairs $(\L;q)$ with the direct sum as the binary operation. The free Grothendieck-Witt group $GW^{+lib}(R)$ is the Grothendieck completion of $Q^{+lib}(R)$. Further, we consider triples of the form $(\L;q_1,q_2)$ where $\L$ is a free module and $q_1$,  $q_2$ are non-degenerate $+$hermitian forms on $\L$. An isomorphism between the triples $(\L;q_1,q_2)$ and $(\L';q_1',q_2')$ is an isomorphism of $R$-modules $a:\L\to \L'$ such that $q_1=a^*\circ q_1'\circ a$ and $q_2=a^*\circ q_2'\circ a$. We denote by $Q^{+lib}_1(R)$ the abelian monoid of isomorphism classes of triples $(\L;q_0,q_1)$ with the direct sum as the binary operation and by $GW^{+lib}_1(R)$ we denote its Grothendieck completion. The main object of our interest is a subgroup of $GW^{lib}_1(R)$  obtained by taking a quotient with respect to the ``gluing" relation 
    \begin{align}\label{def:bordrelation}
        [\L;q_0,q_1]+[\L;q_1,q_2]\sim_{b}[\L;q_0,q_2]\,.
    \end{align}
Thus we define $V(R)\coloneqq GW_1^{+lib}(R)/\sim_b$. We shall use the same notation for the quotient $[\L;q_0,q_1]\in V(R)$. A simple calculation 
\begin{align*}
    [\L;q_0,q_1]+[\L;q,q]=[\L;q_0,q]+[\L;q,q_1]+[\L;q,q]=[\L;q_0,q]+[\L;q,q_1]=[\L;q_0,q_1]
\end{align*}
demonstrates that $[\L;q,q]=0$ within $V(R)$. As a consequence, we have $[\L;q_0,q_1]=-[\L;q_1,q_0]$.

We aim to show that $V(R)$ is an avatar of $\pi_0\CF(R)$. In order to do that, let us first collect basic facts about $V(R)$. There is a ``bordism'' map:
    \begin{align}\label{def:dhat}
         d: GW_1^{+lib}(R)\longrightarrow GW^{+lib}(R)\,,\quad [\L;q_0,q_1]\longmapsto [\L;q_0]-[\L;q_1]\,
    \end{align}
     which is a homomorphism of abelian groups.
    The dimension of the free module is an obvious invariant of $[\L;q]\in GW^{+lib}(R)$ such that we have a group homomorphism 
    \begin{align}
        \mbox{dim}: GW^{+lib}(R)\longrightarrow \IZ\,,\quad [\L;q]\longmapsto  \dim\, \L\,.
    \end{align} 
    The kernel of $\dim$ is called the fundamental ideal $I^{lib}(R)\coloneqq \mbox{ker} (\mbox{dim} : GW^{+lib}(R)\to \IZ)$ of the free Grothendieck-Witt group. Homomorphism \eqref{def:dhat} induces a surjective homomorphism 
        \begin{align}
         \delta : V(R)\longrightarrow I^{lib}(R)\,.
    \end{align}

\begin{lemma}\label{lemma:nu}
   For any free $R$-module $\L$ and any non-degenerate $+$hermitian form $q$ on $\L$, there is a homomorphism of groups:
 \begin{align}\label{Def:nu}
    \nu_{[\scaleto{\L}{5pt};q]}: \gl(\L;R)\longrightarrow V(R)\,,\quad a\longmapsto [\L;q,a^*\sc q\sc a]\,.
 \end{align}
\end{lemma}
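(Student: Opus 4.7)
The plan is to verify two things: first, that the formula $a \mapsto [\L; q, a^{*}\sc q \sc a]$ produces a well-defined element of $V(R)$; and second, that this assignment respects composition in $\gl(\L;R)$ and addition in $V(R)$.

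Well-definedness is quick: for any $a\in \gl(\L;R)$ the form $a^{*}\sc q\sc a$ is $+$hermitian (since $(a^{*}\sc q\sc a)^{*}=a^{*}\sc q^{*}\sc a^{**}=a^{*}\sc q\sc a$ using the canonical identification $\L^{**}\cong \L$) and non-degenerate (as a composition of isomorphisms). Hence the triple $(\L;q,a^{*}\sc q\sc a)$ represents a class in $GW^{+lib}_1(R)$, which descends to $V(R)$.

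The main content is the homomorphism property $\nu(ab)=\nu(a)+\nu(b)$. The obstacle is that the gluing relation \eqref{def:bordrelation} requires the middle forms to match, but the natural middle term here, $a^{*}\sc q\sc a$, does not appear in $\nu(b)=[\L;q,b^{*}\sc q\sc b]$. The idea is to first exhibit $\nu(ab)$ as a class whose endpoints are convenient for gluing, by pulling the triple back along $b^{-1}\in \gl(\L;R)$. Concretely, the map $b^{-1}:\L\to\L$ defines an isomorphism of triples
\[
(\L;q,\,(ab)^{*}\sc q\sc (ab)) \;\cong\; (\L;\,(b^{-1})^{*}\sc q\sc b^{-1},\, a^{*}\sc q\sc a),
\]
since $(b^{-1})^{*}\sc(ab)^{*}\sc q\sc(ab)\sc b^{-1}=a^{*}\sc q\sc a$. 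Applying the gluing relation to the right-hand triple along the middle form $q$ gives
\[
\nu(ab) \;=\; [\L;\,(b^{-1})^{*}\sc q\sc b^{-1},\,q] \;+\; [\L;q,\,a^{*}\sc q\sc a].
\]
The second summand is $\nu(a)$ by definition, and the first summand equals $\nu(b)$: pulling back by $b$ yields the triple isomorphism
\[
(\L;\,(b^{-1})^{*}\sc q\sc b^{-1},\,q) \;\cong\; (\L;q,\,b^{*}\sc q\sc b),
\]
since $b^{*}\sc(b^{-1})^{*}\sc q\sc b^{-1}\sc b=q$ and $b^{*}\sc q\sc b=b^{*}\sc q\sc b$. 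Assembling, and using that $V(R)$ is an abelian monoid, gives $\nu(ab)=\nu(b)+\nu(a)=\nu(a)+\nu(b)$, which is the desired homomorphism property.

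The only conceptual step is the recognition that isomorphism of triples plus the gluing relation let us conjugate the ``base point'' form $q$ freely, at the cost of reindexing by another element of $\gl(\L;R)$. The rest is bookkeeping of $(b^{-1})^{*}(\cdot)b^{-1}$ substitutions, which is purely mechanical.
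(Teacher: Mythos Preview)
Your proof is correct and uses essentially the same ingredients as the paper's: the gluing relation \eqref{def:bordrelation} together with isomorphism of triples by an element of $\gl(\L;R)$. The paper's argument is marginally more direct---it glues $[\L;q,(aa')^{*}\sc q\sc aa']$ along the middle form $a'^{*}\sc q\sc a'$ to get $\nu(a')$ immediately, then uses a single triple isomorphism (via $a'$) to identify the remaining summand with $\nu(a)$---whereas you first pull back by $b^{-1}$, glue along $q$, and then pull back by $b$, costing one extra isomorphism step; but this is a cosmetic difference.
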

    \begin{proof}
    \begin{align*}
     [\L;q,a'^*\sc a^*\sc q\sc a \sc a']\stackrel{(\ref{def:bordrelation})}{=}  [\L;q,a'^*\sc q\sc a' ]+[\L; q\sc a',a'^*\sc a^*\sc q\sc a\sc a' ] =  [\L;q,a'^{*}\sc q\sc a' ]+[\L;q,a^*\sc q\sc a ] 
    \end{align*}
   \end{proof}

\begin{lemma}\label{lemma:K_1actsonV}
    There is a homomorphism of abelian groups
 \begin{align*}
     \nu:K_1(R)\longrightarrow V(R)
 \end{align*}
 which is induced by $\nu_{[\scaleto{\L}{5pt};q]}$, and which does not depend on $\L$ and $q$.
    \begin{proof}
    Let us compare $[\L;q,a^*\sc q\sc a]$ and $[\L';q',a'^*\sc q'\sc a']$ for different $a\in \gl(\L;R)$ and $a'\in \gl(\L';R)$. In order to do that, we introduce an auxiliary module $\L''$ and do the following algebra
    \begin{multline*}
        [\L;q,a^*\sc q\sc a ]=[\L\oplus \L'\oplus \L'';q\oplus q' \oplus q'',a^*\sc q\sc a \oplus q' \oplus q'' ]=\\
        [\L\oplus \L'\oplus \L'';q\oplus q' \oplus q'',(a^*\oplus 1 \oplus 1)\sc (q \oplus q' \oplus q'')\sc (a\oplus 1 \oplus 1) ]\,,
         \end{multline*}
         \begin{multline*}
        [\L';q',a'^*\sc q'\sc a' ]=[\L\oplus \L'\oplus \L'';q\oplus q' \oplus q'',q \oplus a'^* \sc q'\sc a' \oplus q'' ]=\\
        [\L\oplus \L'\oplus \L'';q\oplus q' \oplus q'',(1\oplus a'^*
        \oplus 1)\sc (q \oplus q' \oplus q'')\sc (1\oplus a'
        \oplus 1) ]
    \end{multline*}
    Using the previous lemma, we find that 
    \begin{multline*}
         [\L\oplus \L'\oplus \L'';q\oplus q' \oplus q'',(1\oplus a'^*
        \oplus 1)\sc (q \oplus q' \oplus q'')\sc (1\oplus a'
        \oplus 1) ]=\\
        - [\L\oplus \L'\oplus \L'';q\oplus q' \oplus q'',(1\oplus (a'^*)^{-1}
        \oplus 1)\sc (q \oplus q' \oplus q'')\sc (1\oplus a'^{-1}
        \oplus 1) ]
    \end{multline*}
    such that the difference is given by
    \begin{align*}
       [\L;q,a^*\sc q\sc a ]-[\L';q',a'^*\sc q'\sc a' ]= [\L\oplus \L'\oplus \L'';q\oplus q' \oplus q'',(a\oplus a'^{-1}
        \oplus 1)^*\sc (q \oplus q' \oplus q'')\sc (a\oplus a'^{-1}
        \oplus 1) ]\,.
    \end{align*}
We use the identity holding for any $a\in \gl(\L;R)$: 
\begin{align}\label{DiagIdentity}
    \begin{pmatrix}
        a & 0\\
        0 & a^{-1}
    \end{pmatrix}=
     \begin{pmatrix}
        1 & a\\
        0 & 1
    \end{pmatrix}
     \begin{pmatrix}
        1 & 0\\
        -a^{-1} & 1
    \end{pmatrix}
    \begin{pmatrix}
        1 & a\\
        0 & 1
    \end{pmatrix}
    \begin{pmatrix}
        1 & 0\\
        1 & 1
    \end{pmatrix}
     \begin{pmatrix}
        1 & -1\\
        0 & 1
    \end{pmatrix}
     \begin{pmatrix}
        1 & 0\\
        1 & 1
    \end{pmatrix}
\end{align}
If $a$ and $a'$ represent the same class in $\K_1(R)$, then they are equivalent up to stabilization and elementary automorphism: combination of this fact with identity \eqref{DiagIdentity} demonstrates that  $(a\oplus a'^{-1}
        \oplus 1)$ is an elementary automorphism if the auxiliary module $\L''$ is large enough. Thus, we have demonstrated that the difference
    \begin{align*}
        [\L;q,a^*\sc q\sc a ]-[\L';q',a'^*\sc q'\sc a' ]=
         [\L\oplus \L'\oplus \L'';q\oplus q' \oplus q'',e^*\sc (q \oplus q' \oplus q'')\sc e]
    \end{align*}
    for some elementary automorphism $e\in \egl(\L\oplus \L'\oplus \L'';R)$, which concludes the proof.
    \end{proof}
\end{lemma}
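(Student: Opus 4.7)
The plan is to proceed in three steps, exploiting the fact that $V(R)$ is already abelian. First, I would observe that since $\nu_{[\L;q]} \colon \gl(\L; R) \to V(R)$ is a group homomorphism by Lemma~\ref{lemma:nu} and the target is abelian, it automatically kills the commutator subgroup $[\gl(\L; R), \gl(\L; R)]$. Passing to the stable limit under the inclusions $a \mapsto a \oplus 1$ and invoking Whitehead's lemma, which identifies $\egl(R) = [\gl(R), \gl(R)]$, the stabilized map descends to a homomorphism $\K_1(R) = \gl(R)/\egl(R) \to V(R)$.

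Second, to compare $\nu_{[\L;q]}(a)$ and $\nu_{[\L';q']}(a')$ on representatives of the same $\K_1(R)$-class, I would work inside a common module $\L \oplus \L' \oplus \L''$ with form $q \oplus q' \oplus q''$, where $\L''$ is an auxiliary free module (with any non-degenerate form $q''$) to be enlarged as needed. The gluing relation lets me rewrite $[\L;q, a^{*}qa]$ as the class of the analogous triple on the big module with the second and third blocks acting by the identity, and similarly for the other term. Combined with the identity $\nu_{[\L;q]}(b^{-1}) = -\nu_{[\L;q]}(b)$, which follows from $\nu_{[\L;q]}(1) = [\L; q, q] = 0$ and the homomorphism property of Lemma~\ref{lemma:nu}, this yields
\begin{align*}
  \nu_{[\L;q]}(a) - \nu_{[\L';q']}(a') = \nu_{[\L\oplus \L'\oplus \L''; \, q\oplus q'\oplus q'']}\bigl(a \oplus (a')^{-1} \oplus 1\bigr).
\end{align*}

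Third, I would invoke the Whitehead identity \eqref{DiagIdentity}, which expresses $\diag(b, b^{-1})$ as a product of six elementary matrices. Factoring
\[
  \diag(a, (a')^{-1}) = \diag\bigl(a(a')^{-1}, 1\bigr) \cdot \diag(a', (a')^{-1})
\]
reduces the question to two facts: the second factor is elementary by \eqref{DiagIdentity}, while the first factor is elementary after enlarging $\L''$ because $a(a')^{-1} \in \egl(R)$ stably by the hypothesis $[a] = [a']$ in $\K_1(R)$. Consequently $\diag(a, (a')^{-1}, 1) \in \egl(\L \oplus \L' \oplus \L''; R)$, and by step one the right-hand side of the displayed equation vanishes. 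This simultaneously proves that $\nu$ descends to $\K_1(R)$ and that the resulting map is independent of $(\L, q)$.

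The main subtlety I anticipate is the stabilization bookkeeping: the auxiliary module $\L''$ must be chosen large enough that both the six-factor Whitehead decomposition and the stable elementarity of $a(a')^{-1}$ can be realized as elementary automorphisms of the same module $\L \oplus \L' \oplus \L''$. This is a standard maneuver, but it is precisely where the argument depends essentially on passing to the direct limit rather than arguing at a fixed rank.
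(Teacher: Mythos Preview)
Your proposal is correct and follows essentially the same route as the paper's proof: stabilize both $\nu_{[\L;q]}(a)$ and $\nu_{[\L';q']}(a')$ to a common module $\L\oplus\L'\oplus\L''$, express the difference as $\nu$ applied to $a\oplus (a')^{-1}\oplus 1$, and then argue this automorphism is elementary (for $\L''$ large) so that the difference vanishes because $\nu$ kills commutators.

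One small point to clean up in your step~3: the factorization $\diag(a,(a')^{-1})=\diag(a(a')^{-1},1)\cdot\diag(a',(a')^{-1})$ is ill-typed as written, since $a\in\gl(\L;R)$ and $a'\in\gl(\L';R)$ act on different summands and the product $a(a')^{-1}$ is not defined. The conclusion you want, however, follows more directly without any such factorization: the class of $a\oplus(a')^{-1}\oplus 1$ in $\K_1(R)$ is $[a]\cdot[a']^{-1}=1$ by hypothesis, so this automorphism is stably elementary. (The paper phrases this as ``$a$ and $a'$ are equivalent up to stabilization and elementary automorphism'' and then invokes \eqref{DiagIdentity}; either way the point is the same.) Your closing paragraph already anticipates exactly this stabilization bookkeeping, so the fix is purely cosmetic.
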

 
\begin{corollary}\label{Cor:Vfactorization}
    For any triple $[\L;q_0,q_1]\in V(R)$ and any $a_0,a_1\in \gl(R)$, the following relation holds
    \begin{align*}
        [\L;a_0^{*}\sc q_0\sc a_0,a_1^*\sc q_1\sc a_1]=[\L;q_0,q_1]+\nu\, (\Det\;( a_1\sc a_0^{-1}))
    \end{align*}
    where $\Det(a_1\sc a_0^{-1})$ is the image of $a_1\sc a_0^{-1}$ in $\K_1(R)$.
\end{corollary}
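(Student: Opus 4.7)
The plan is to derive the corollary directly from the gluing relation \eqref{def:bordrelation} together with Lemmas \ref{lemma:nu} and \ref{lemma:K_1actsonV}. The key observation is that the triple $[\L;a_0^{*}\sc q_0\sc a_0,a_1^*\sc q_1\sc a_1]$ can be telescoped by inserting the unmodified forms $q_0$ and $q_1$ as intermediate terms, which splits the expression into three pieces, the middle one being exactly $[\L;q_0,q_1]$ and the outer ones being of the form treated by the homomorphism $\nu_{[\L;q]}$ of Lemma \ref{lemma:nu}.

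Concretely, first I would apply the gluing relation \eqref{def:bordrelation} twice to obtain
\begin{align*}
[\L;a_0^{*}\sc q_0\sc a_0,\,a_1^*\sc q_1\sc a_1] = [\L;a_0^{*}\sc q_0\sc a_0,\,q_0] + [\L;q_0,\,q_1] + [\L;q_1,\,a_1^*\sc q_1\sc a_1].
\end{align*}
Next I would use the identity $[\L;q,q']=-[\L;q',q]$, established right after \eqref{def:bordrelation}, to flip the first summand, giving
\begin{align*}
[\L;a_0^{*}\sc q_0\sc a_0,\,q_0] = -[\L;q_0,\,a_0^{*}\sc q_0\sc a_0] = -\nu_{[\L;q_0]}(a_0),
\end{align*}
while by definition the third summand equals $\nu_{[\L;q_1]}(a_1)$.

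Finally I would invoke Lemma \ref{lemma:K_1actsonV}, which states that $\nu_{[\L;q]}$ descends to a well-defined homomorphism $\nu\colon \K_1(R)\to V(R)$ independent of $\L$ and $q$. Applying this, both outer contributions become $-\nu(\Det(a_0))$ and $\nu(\Det(a_1))$ respectively, and using that $\nu$ is a group homomorphism and that $\Det$ is a group homomorphism from $\gl(R)$ to $\K_1(R)$ (so $\Det(a_1)-\Det(a_0)=\Det(a_1\sc a_0^{-1})$ in the additive notation for $\K_1(R)$), one obtains
\begin{align*}
[\L;a_0^{*}\sc q_0\sc a_0,\,a_1^*\sc q_1\sc a_1] = [\L;q_0,q_1] + \nu(\Det(a_1\sc a_0^{-1})),
\end{align*}
as claimed. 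There is no real obstacle here; the only subtle point is the appeal to Lemma \ref{lemma:K_1actsonV} to ensure that one may freely combine $\nu_{[\L;q_0]}(a_0)$ and $\nu_{[\L;q_1]}(a_1)$ into a single expression involving $\nu$ on $\K_1(R)$, which is exactly what that lemma provides.
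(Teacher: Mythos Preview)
Your proposal is correct and is precisely the argument the paper has in mind: the corollary is stated without proof immediately after Lemmas~\ref{lemma:nu} and~\ref{lemma:K_1actsonV}, and the telescoping via the gluing relation~\eqref{def:bordrelation} followed by an appeal to those two lemmas is exactly the intended derivation. There is nothing to add.
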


 Our immediate goal is to connect $V(R)$ with $\CF(R)$.
\begin{lemma}\label{lemma:CF/EGL}
    The abelian monoid $\CF(R)/\egl(R)$ is a group.
\end{lemma}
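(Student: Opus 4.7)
The plan is to recognize that the abelian monoid structure on $\CF(R)/\egl(R)$ is inherited from $\CF(R)$, with neutral element the class of $\l^{+}$ (which becomes trivial in the stable direct limit). So it suffices to produce an inverse for each class $[\phi]$.

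The candidate inverse of $[\phi]\in \CF(R)/\egl(R)$ is $[-\phi^{-1}]$. Indeed, if $\phi$ is a non-degenerate $+$hermitian form on a free module $\L$, then $\phi^{-1}\colon \L^{*}\to \L^{**}\cong \L$ is again $+$hermitian, and hence so is $-\phi^{-1}$, so the class $[-\phi^{-1}]$ is well-defined. To verify the identity $[\phi]+[-\phi^{-1}]=0$, I would establish that $\phi\oplus(-\phi^{-1})$ is $\egl$-equivalent to $\l^{+}$. Note that this is strictly stronger than the already-proven homotopy $\phi\oplus(-\phi^{-1})\sim_h \l^{+}$, so I cannot simply invoke that lemma as a black box; I need to look inside its proof.

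The key observation is that the explicit homotopy used to prove $q\oplus -q^{-1}\sim_h \l^{+}$ is of the conjugation form $a(T)^{*}\,(\phi\oplus -\phi^{-1})\,a(T)$ with
$$a(T)=\begin{pmatrix}1 & T\phi^{-1}\\ 0 & 1\end{pmatrix}\begin{pmatrix}1 & 0\\ -T\phi/2 & 1\end{pmatrix}.$$
At $T=1$, a direct unpacking of that formula gives $a(1)^{*}\,(\phi\oplus -\phi^{-1})\,a(1)=\l^{+}$. Since $a(1)$ is a product of upper/lower triangular transvections with identity blocks on the diagonal, it lies in $\egl(\L\oplus \L^{*};R)$. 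This witnesses the required $\egl$-equivalence between $\phi\oplus(-\phi^{-1})$ and $\l^{+}$, so $[\phi]+[-\phi^{-1}]=[\l^{+}]=0$ in $\CF(R)/\egl(R)$, and every element has an inverse.

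There is no real obstacle: the main ingredient is already in hand from the earlier lemma, and the only new observation is that the conjugator at $T=1$ happens to lie in $\egl$, not merely in $\gl$. The standing assumption $2\in R^{\times}$ enters through the $-T\phi/2$ term, as expected.
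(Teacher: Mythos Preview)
Your proposal is correct and follows essentially the same approach as the paper: both take the explicit homotopy from the lemma $q\oplus -q^{-1}\sim_h \l^{+}$, evaluate the conjugating automorphism at $T=1$, and observe that this automorphism is a product of block-triangular transvections and hence lies in $\egl$, so that $[-\phi^{-1}]$ serves as the inverse of $[\phi]$. Your write-up is slightly more explicit about why $a(1)\in\egl$, but the argument is the same.
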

\begin{proof}
  Any element in $\CF(R)$ is the stabilization of some form $\phi\in \FL(R)$. For appropriate choice of a module, homotopy \eqref{Eq:TrivMas} evaluated at $T=1$ gives an elementary transformation, denote it by $e$, such that $e^*\sc (\phi \oplus -\phi^{-1})\sc e=\l^{+}\in \CF_{\scaleto{\L\oplus \L}{6pt}}(R)$. Therefore, the class of $ -\phi^{-1}$ is inverse to the class of $\phi$ in  $\CF(R)/\egl(R)$. 
\end{proof}

\begin{lemma}\label{Lem:homFtoV}
    The map 
    \begin{align*}
   \chi :\FL(R)\longrightarrow V(R)\,,\quad \phi\longmapsto [\L\oplus \L^*;\l^{+},\phi]
    \end{align*}
factors through the stabilization with the projection $\FL(R)\to \CF(R)/\egl(R) $ where $\egl(R)$ acts on $\CF(R)$ by conjugations. The following diagram commutes
    \begin{center}
\begin{tikzcd}[column sep=scriptsize]
\FL(R) \arrow[dr] \arrow[rr,"\scaleto{\chi}{4
pt}"  ]
    & & V(R) \\
& \CF(R)/\egl(R) \arrow[ur, "\scaleto{\epsilon}{4
pt}" ]
\end{tikzcd}
\end{center}
and the induced map $\epsilon$ is a surjective homomorphism of abelian groups.
\end{lemma}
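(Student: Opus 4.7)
The plan is to verify separately that $\chi$ is (i) a homomorphism of abelian monoids under $\oplus$, (ii) invariant under stabilization $\phi \mapsto \phi \oplus \lambda^{+}$, and (iii) invariant under the $\egl(R)$-action on $\CF(R)$ by conjugation. Together these imply that $\chi$ descends through the projection $\FL(R) \to \CF(R)/\egl(R)$ to a well-defined map $\epsilon$ making the stated triangle commute. Since $\CF(R)/\egl(R)$ is an abelian group by Lemma~\ref{lemma:CF/EGL}, any monoid map out of it is automatically a group homomorphism. Surjectivity of $\epsilon$ is then a separate direct computation.

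For (i) the equality $\chi(\phi\oplus\psi)=\chi(\phi)+\chi(\psi)$ reduces to the evident swap isomorphism $(\L\oplus\L^*)\oplus(\L'\oplus\L'^*)\cong(\L\oplus\L')\oplus(\L\oplus\L')^*$ of the two inner summands, under which $\lambda^{+}\oplus\lambda^{+}$ matches the $\lambda^{+}$ on the regrouped module. For (ii), the special case $q_{0}=q_{1}$ of the gluing relation \eqref{def:bordrelation} yields $[M;q,q]=0$ in $V(R)$ for any $M$ and $q$; in particular $\chi(\lambda^{+})=[\L'\oplus\L'^*;\lambda^{+},\lambda^{+}]=0$, and combining with (i) gives $\chi(\phi\oplus\lambda^{+})=\chi(\phi)$. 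For (iii), for any $e\in\egl(\L\oplus\L^*;R)$, Corollary~\ref{Cor:Vfactorization} applied with $a_{0}=1$ and $a_{1}=e$ gives
\[
\chi(e^{*}\sc\phi\sc e) = [\L\oplus\L^*;\lambda^{+},e^{*}\sc\phi\sc e] = [\L\oplus\L^*;\lambda^{+},\phi]+\nu(\Det(e)) = \chi(\phi),
\]
where $\nu(\Det(e))=0$ because the class of any elementary automorphism in $\K_{1}(R)$ is trivial by Whitehead's lemma.

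The main obstacle is surjectivity of $\epsilon$. My approach is to take an arbitrary generator $[\L;q_{0},q_{1}]\in V(R)$ and first stabilize by adding the trivial class $[\L^{*};-q_{0}^{-1},-q_{0}^{-1}]=0$, obtaining
\[
[\L;q_{0},q_{1}] = [\L\oplus\L^{*};\,q_{0}\oplus-q_{0}^{-1},\,q_{1}\oplus-q_{0}^{-1}].
\]
The explicit homotopy used to prove \eqref{Eq:TrivMas}, evaluated at $T=1$, realizes $q_{0}\oplus-q_{0}^{-1}$ as $e^{*}\sc\lambda^{+}\sc e$ for an explicit elementary $e\in\egl(\L\oplus\L^{*};R)$; this is the step that genuinely uses invertibility of $2$ in $R$. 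Substituting this into the first slot and invoking Corollary~\ref{Cor:Vfactorization} with $a_{0}=e$, $a_{1}=1$, the correction $\nu(\Det(e^{-1}))$ vanishes by elementarity, leaving
\[
[\L;q_{0},q_{1}] = [\L\oplus\L^{*};\lambda^{+},\,q_{1}\oplus-q_{0}^{-1}] = \chi(q_{1}\oplus-q_{0}^{-1}),
\]
which exhibits every generator of $V(R)$ in the image of $\epsilon$.

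The delicate point throughout is the bookkeeping of auxiliary free modules and their duals while passing between $\CF(R)/\egl(R)$ and $V(R)$: one must ensure that the elementary conjugation produced from \eqref{Eq:TrivMas} is applied on the correct side of the first entry of the triple, so that Corollary~\ref{Cor:Vfactorization} absorbs the correction term into a trivial $\K_{1}$-class. Once this identification is organized, the verifications of (i)--(iii) and the surjectivity computation are formal consequences of the gluing relation and Corollary~\ref{Cor:Vfactorization}.
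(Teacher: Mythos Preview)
Your proof is correct and follows essentially the same approach as the paper: factorization through stabilization and through the $\egl(R)$-action is handled via Corollary~\ref{Cor:Vfactorization}, and surjectivity is shown by stabilizing $[\L;q_0,q_1]$ with $[\L^*;-q_0^{-1},-q_0^{-1}]$ and then using the elementary transformation extracted from the homotopy \eqref{Eq:TrivMas} at $T=1$ to replace the first slot by $\lambda^{+}$. Your write-up is more explicit in organizing steps (i)--(iii) and in tracking which side of Corollary~\ref{Cor:Vfactorization} is being used, but the underlying argument is the same.
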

\begin{proof}
    Clearly, all the maps on the diagram are homomorphisms of monoids. Factorization of $\chi$ through the stabilization is straightforward, while factorization through the action of $\egl(R)$ is due to Corollary \ref{Cor:Vfactorization}. Let us demonstrate that $\epsilon$ is surjective. Any element of $V(R)$ is a triple $[\L;q_0,q_1]$ obtained from $GW_1^{+lib}$ by taking the quotient with respect to \eqref{def:bordrelation}. Let us notice that
    \begin{align*}
        [\L;q_0,q_1]=[\L;q_0,q_1]+[\L^*;-q_0^{-1},-q_0^{-1}]=[\L\oplus \L^*;q_0\oplus -q_0^{-1},q_1\oplus -q_0^{-1}]\,.
    \end{align*}
    Homotopy \eqref{Eq:TrivMas} evaluated at $T=1$ is an elementary transformation mapping $q_0\oplus -q_0^{-1}$ to $\l^{+}$. Using this observation and Corollary \ref{Cor:Vfactorization}, we conclude that $[\L;q_0,q_1]=[\L\oplus \L^*;\l^{+},q_1\oplus -q_0^{-1}]$ as elements of $V(R)$, what demonstrates that $\epsilon$ is surjective.  
\end{proof}

\begin{theorem}\label{Thm:VandF/EGL}
    For a commutative ring with involution $R$ where $2$ is invertible, there is a homomorphism of groups:
    \begin{align}
        \rho: V(R)\longrightarrow \CF(R)/\egl(R)\
    \end{align}
    induced by the map $[\L;q_0,q_1]\mapsto  q_1\oplus -q_0^{-1}$, such that $\rho$ is inverse to $\epsilon$ and $V(R)\cong \CF(R)/\egl(R)$.
\end{theorem}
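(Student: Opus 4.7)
The plan is to verify that $[\L;q_0,q_1]\mapsto q_1\oplus -q_0^{-1}$ descends to a well-defined homomorphism $\rho$, and then to check that $\rho$ and $\epsilon$ are mutually inverse. The well-definedness reduces to two checks: (a) the assignment respects isomorphism of triples, and (b) it respects the gluing relation \eqref{def:bordrelation}.

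For (a), given an isomorphism $a:(\L;q_0,q_1)\to (\L';q_0',q_1')$ with $q_i=a^*\sc q_i'\sc a$, one sees that $q_1\oplus -q_0^{-1}$ is obtained from $q_1'\oplus -q_0'^{-1}$ by conjugation with $h=a\oplus (a^*)^{-1}$, so the task is to realize $h$ as an elementary transformation after stabilization. A direct calculation shows that $a^{-1}\oplus a^*$ fixes $\lambda^+$ under conjugation, so after stabilizing $q_1'\oplus -q_0'^{-1}$ by an extra $\lambda^+$, the joint automorphism $h\oplus (a^{-1}\oplus a^*)$ acts correctly on the original block while preserving the added $\lambda^+$. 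After rearranging summands into the order $\L,\L,\L^*,\L^*$, this joint automorphism decomposes as $(a\oplus a^{-1})\oplus ((a^*)^{-1}\oplus a^*)$, and each summand is elementary by identity \eqref{DiagIdentity}. For (b), unpacking both sides of the candidate map on $[\L;q_0,q_1]+[\L;q_1,q_2]$ versus $[\L;q_0,q_2]$ leaves the extra factor $q_1\oplus -q_1^{-1}$; evaluating the homotopy \eqref{Eq:TrivMas} at $T=1$ provides an elementary conjugation taking $q_1\oplus -q_1^{-1}$ to $\lambda^+$, which is the identity in $\CF(R)/\egl(R)$. The homomorphism property of $\rho$ is then automatic from the abelian monoid structure of $\CF(R)/\egl(R)$.

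The inverse relations are largely packaged in earlier results. For $\rho\sc \epsilon=\mathrm{id}$ one computes $\rho(\epsilon(\phi))=\phi\oplus -\lambda^+$, and the homotopy \eqref{def:homLptoLm}, being built from elementary $\lm$-unitaries, identifies $-\lambda^+$ with $\lambda^+$ in $\CF(R)/\egl(R)$; stabilization then absorbs this trivial factor. For $\epsilon\sc \rho=\mathrm{id}$, one reuses the equality $[\L;q_0,q_1]=[\L\oplus \L^*;\lambda^+, q_1\oplus -q_0^{-1}]$ in $V(R)$ already established inside the proof of Lemma \ref{Lem:homFtoV}. The main obstacle is step (a): one must carefully orchestrate the $\lambda^+$-stabilization, the rearrangement of summands, and identity \eqref{DiagIdentity} so that the composite conjugation is witnessed by an honest element of $\egl$ rather than just an arbitrary module automorphism.
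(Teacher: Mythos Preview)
Your proof is correct and follows essentially the same route as the paper's: the paper dispatches $\epsilon\circ\rho=\mathrm{id}$ via the argument inside Lemma \ref{Lem:homFtoV} and $\rho\circ\epsilon=\mathrm{id}$ via the homotopy \eqref{def:homLptoLm} evaluated at $T=1$, exactly as you do. The paper is terser on well-definedness---it only notes that $[\L;q,q]$ maps to the trivial class by Lemma \ref{lemma:CF/EGL} (your step (b)) and leaves step (a) implicit---so your explicit reduction to identity \eqref{DiagIdentity} after a $\lambda^+$-stabilization is welcome added detail rather than a different argument.
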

\begin{proof}
It is straightforward to check that $\rho$ is a homomorphism of abelian groups: the image of $[\L;q,q]$ in $\CF(R)$ is trivial thanks to the proof of Lemma \ref{lemma:CF/EGL}. The composition $\epsilon\circ \rho$ is identity thanks to the proof of Lemma \ref{Lem:homFtoV}. Homotopy \eqref{def:homLptoLm} evaluated at $T=1$ gives an elementary transformation, denote it by $e_{-}$, such that $e_{-}^{*}\sc \l^{+}\sc e_{-}=-\l^{+}$. Therefore, $ \rho \circ \epsilon$ is identity as well.
\end{proof}
\begin{theorem}\label{Thm:F/EGLandpiF}
    For a commutative ring with involution $R$ where $2$ is invertible there is an isomorphism of groups:
\begin{align*}
    \CF(R)/\mbox{EGL}(R)\cong \pi_0\CF(R).
\end{align*}
\end{theorem}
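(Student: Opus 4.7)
My plan is to show that, on the stabilized space $\CF(R)$, the two equivalence relations coming from algebraic homotopy and from the $\egl(R)$-action coincide, which would immediately yield the desired isomorphism of quotients. Since $\CF(R)/\egl(R)$ is already a group by Lemma \ref{lemma:CF/EGL}, any surjection from it onto $\pi_0\CF(R)$ automatically endows the latter with a compatible group structure, so it suffices to produce such a surjection and then establish its injectivity.

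The direction ``$\egl$-orbit equivalent $\Rightarrow$ homotopic'' is the easy one. Any $e \in \egl(R)$ is a finite product of elementary transvections $1 + r E_{ij}$, each of which lifts to the polynomial family $1 + T r E_{ij} \in \egl(R[T])$, interpolating between the identity at $T=0$ and the transvection itself at $T=1$. Taking the product yields $e(T) \in \egl(R[T])$ with $e(0) = 1$ and $e(1) = e$, and then $e(T)^{*} \phi\, e(T) \in \FL(R[T])$ is an explicit algebraic homotopy from $\phi$ to $e^{*} \phi\, e$. Passing to the stable limit produces a surjective homomorphism $\Pi : \CF(R)/\egl(R) \twoheadrightarrow \pi_0\CF(R)$.

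For injectivity of $\Pi$, I would invoke the isomorphism $\epsilon : \CF(R)/\egl(R) \xrightarrow{\sim} V(R)$ of Theorem \ref{Thm:VandF/EGL} and show that the map $\chi: \FL(R) \to V(R)$ of Lemma \ref{Lem:homFtoV} is homotopy-invariant. A short manipulation using the bordism relation \eqref{def:bordrelation} together with $[\L; q, q]=0$ reduces the task to proving that $[\L \oplus \L^{*}; \psi, \phi] = 0$ in $V(R)$ whenever $\Gamma(T) \in \FL(R[T])$ satisfies $\Gamma(0) = \phi$ and $\Gamma(1) = \psi$. The natural element to consider is
\[
\tau := \bigl[(\L\oplus\L^{*})[T];\ \psi[T],\ \Gamma(T)\bigr] \in V(R[T]),
\]
where $\psi[T]$ denotes the constant $R[T]$-extension of $\psi$. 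Its two evaluations are $ev_0(\tau) = [\L\oplus\L^{*}; \psi, \phi]$ and $ev_1(\tau) = [\L\oplus\L^{*}; \psi, \psi] = 0$, so the desired vanishing follows the moment we know that the two evaluation maps $V(R[T]) \to V(R)$ agree.

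The main obstacle is therefore precisely this homotopy invariance of $V$. It is the hermitian K-theoretic counterpart of the homotopy invariance of $K_1$, and I would derive it by applying Karoubi's fundamental theorem (Theorem 1.2 of \cite{KaroubiPeriodicity}) to the Grothendieck-Witt monoid $GW_1^{+lib}$ of pairs of forms, then verifying that the bordism relation \eqref{def:bordrelation} descends compatibly under scalar extension and evaluation. Once the invariance $V(R[T]) \cong V(R)$ is in place, $\chi$ factors through a homomorphism $\chi_h : \pi_0\CF(R) \to V(R)$, and the composition $\chi_h \circ \Pi$ reproduces the isomorphism $\epsilon$, forcing $\Pi$ to be an isomorphism as well.
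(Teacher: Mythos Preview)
Your easy direction agrees with the paper's. For the hard direction, however, your route diverges substantially from the paper's, and the step you flag as ``the main obstacle'' is not adequately justified.

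The paper does \emph{not} appeal to homotopy invariance of $V$ or to Theorem~\ref{Thm:VandF/EGL} at all. Instead it gives a direct, self-contained argument adapted from Ojanguren and Swan: given a path $\alpha(T)\in\FL(R[T])$ with $\alpha(0)=\phi$ and $\alpha(1)=\psi$, one first stabilizes and conjugates by explicit elementary matrices so that the degree in $T$ drops to at most one, yielding $\alpha_0+T\alpha_1$. Then, writing $\xi=\alpha_0^{-1}\alpha_1$ (which is nilpotent), the formal square root $p(T)$ of $1+T$ gives a polynomial $p(\xi T)$ satisfying $\bigl(p(\xi T)^{-1}\bigr)^{*}(\alpha_0+T\alpha_1)\,p(\xi T)^{-1}=\alpha_0$. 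Since $p(\xi\cdot 0)=1$, homotopy invariance of $\K_1$ for regular rings forces $p(\xi)$ to be stably elementary, and evaluating at $T=1$ produces the desired elementary conjugation between $\st(\phi)$ and $\st(\psi)$. The only external input is the classical $\K_1(R[T])\cong\K_1(R)$.

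Your argument is logically sound \emph{provided} one has $ev_0=ev_1\colon V(R[T])\to V(R)$, but your proposed derivation of this is the gap. Theorem~1.2 of \cite{KaroubiPeriodicity} concerns the ordinary Grothendieck--Witt group $GW^{\pm}$, not the monoid $GW_1^{+lib}$ of pairs, and there is no obvious way to transport that statement to pairs ``and then check the bordism relation descends.'' Homotopy invariance of Karoubi's intermediate functor $V$ is genuinely true, but establishing it for the concrete $V(R)$ defined here via triples requires either identifying it with Karoubi's spectrum-level construction (nontrivial and not carried out in the paper) or giving a direct proof---which, when unwound, essentially reproduces the linearization-plus-square-root argument the paper uses. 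So your approach either imports heavier machinery than the paper does, or becomes circular in spirit. The paper's proof has the advantage of being elementary and relying only on the algebraic $\K_1$ homotopy invariance already invoked elsewhere.
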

\begin{proof}
    If there exists $e\in \egl(\L\oplus \L^*;R)$ such that $e^*\sc \phi \sc e=\psi\in \FL(R)$, then $\phi \sim_h \psi$: a concrete homotopy can be built from any presentation of $e$ as a product of elementary matrices, analogously to the one used in the proof of Proposition \ref{Prop:EsptoPath}. This demonstrates that the map $\CF(R)/\mbox{EGL}(R)\to \pi_0\CF(R)$ is surjective. Let us demonstrate that this map is also injective. In order to do that, we shall prove that if $\phi \sim_h \psi$, as elements of $\CF(R)$, then there exists $a\in \egl(R)$ such that $a^{*}\sc \phi\sc a=\psi$. The proof below is adapted from \cite{Ojanguren} and \cite{swan1968algebraic}.

Let $\alpha(T)\in \FL(R[T])$ be such that $\alpha(0)=\phi$ and $\alpha(1)=\psi$. First of all, let us show that there exists a free module $\L'$ and $e\in \egl(\L\!\oplus\! \L^*\!\oplus \! \L'\! \oplus \! \L'^*;R[T])$ such that
\begin{align}\label{eq:linearization}
    e^{*}\sc( \alpha(T)\oplus \l^{+}_{\scaleto{\L'}{6pt}}) \sc e =\alpha_0+T\alpha_1
\end{align} 
for some $\alpha_0,\alpha_1\in \CF_{\scaleto{\L\oplus \L'}{6pt}}(R)$.  Since $2$ is invertible in $R$, as well as in $R[T]$, we can represent any $+$hermitian form as $\alpha(T)=\gamma(T)+\gamma^*(T)$ for some non-degenerate quadratic form $\gamma$. Let us assume that the degree of $\gamma$ in $T$ is $m$ and decompose $\gamma$ into a sum of homogeneous components
\begin{align*}
    \gamma=\sum_{k=0}^{m}\gamma_k T^{k}\,.
\end{align*}
Let us fix a basis for $\L\!\oplus\! \L^*$, and let $\{\gamma_{ij}\}_{i,j=1}^{r}$ be the matrix of $\gamma_m$ in this basis, then introduce a module map : 
\begin{align*}
    \kappa = 
    \begin{pmatrix}
        -T^{m-1}     &   0           &   0             & \dots & 0\\
        \gamma_{11}T & \gamma_{12} T & \gamma_{13} T   & \dots & \gamma_{1r}T\\
        0            & -T^{m-1}      &  0              & \dots & 0\\
                     &    \dots       &                 &   \dots    &  \\
        \gamma_{r1}T &  \gamma_{r2}T & \gamma_{r3}T    & \dots  & \gamma_{rr}T
    \end{pmatrix}\,.
\end{align*}
This map satisfies $\kappa^{*}\l^{+}_{\scaleto{\L'}{6pt}}\kappa=-(\gamma_m+\gamma_m^{*})T^m$ and we can use it to cancel the $T^m$ term in the homogeneous decomposition of $\alpha(T)$:
\begin{align}\label{eq:degreelow}
    \begin{pmatrix}
        1 & \kappa^{*}\\
        0 & 1 
    \end{pmatrix}
    \begin{pmatrix}
        \alpha(T) & 0\\
        0 & \l^{+} 
    \end{pmatrix}
    \begin{pmatrix}
        1 & 0\\
        \kappa & 1 
    \end{pmatrix}=
        \begin{pmatrix}
        \alpha(T)+\kappa^{*}\l^{+}\kappa & \kappa ^*\l^{+}\\
        \l^{+}\kappa & \l^{+} 
    \end{pmatrix}
\end{align}
The R.H.S. of \eqref{eq:degreelow} is an element of $\CF_{\scaleto{\L\oplus \L'}{6pt}}(R)$ with the degree at $T$ at most $m-1$.  By iterating this process, we end up with a transformation of the form \eqref{eq:linearization}, which has the degree at most one in $T$.

Let $p(T)$ be a formal series satisfying $p(T)^2=1+T$, i.e., 
\begin{align*}
    p(T)=\sum_{k=0}^{\infty}\begin{pmatrix}
    1/2\\
    k
\end{pmatrix}T^{k}\,,
\end{align*}
and let us introduce an automorphism $\xi=\alpha_0^{-1}\sc \alpha_1$, which is nilpotent, see Section 16 of \cite{swan1968algebraic}. Therefore, $p(\xi T)$ is a polynomial at $\xi T$. We notice that $\alpha_0 \sc\xi = \xi^*\sc \alpha_0$ and $\alpha_1\sc \xi = \xi^* \sc\alpha_1$ such that 
\begin{align}\label{eq:stabF}
    \left(p(\xi T)^{-1}\right)^*\sc (\alpha_0+\alpha_1T)\sc  p(\xi T)^{-1}=(\alpha_0+\alpha_1T)\sc p(\xi T)^{-2}=\alpha_0\,.
\end{align}
Further, we use the homotopy invariance of $\K_1$-functor for regular rings. Namely, if we have a path of automorphisms $a(T)\in \gl(\L;R[T])$ for some free module $\L$ such that $a(0)=1$, then $a(1)$ is elementary up to stabilization. We apply this result to $p(\xi T)^{-1}$ to demonstrate that $p(\xi)$ is elementary up to stabilization. 

Let us denote the image of $\FL(R)\ni \phi$ in $\CF(R)$ by $\st(\phi)$. Evaluation of \eqref{eq:stabF} at $1$ provides us with an elementary transformation $a\in \egl(R)$ such that $a^*\sc \st(\phi)\sc a=\st(\psi)$ which finishes the proof.
\end{proof}

As a corollary from Theorems \ref{Thm:main}, \ref{Thm:F/EGLandpiF}, and \ref{Thm:VandF/EGL}, we get an isomorphism of abelian monoids $\pi_0\Omega \CL(R)\cong V(R)$. The group $V(R)$ is a well-studied object, see \cite{KaroubiPeriodicity,KaroubiLocalization1}, while $V$ is a functor ``intermediate" between the algebraic and hermitian $\K$-theory functors \cite{Karoubi1980}. Later we shall be interested in Laurent extensions of $\IF_p$ with $p$ odd. Using techniques from \cite{KaroubiLocalization1}, one checks that $V(\IF_p)\cong \IF_p^{\times}$ and $V(\IF_p[x,x^{-1}])\cong \IZ_p^{\times}\oplus GW^{+lib}(\IF_p)$ where $GW^{+lib}(\IF_p)\cong GW^{+}(\IF_p)$; further Laurent extensions are less tractable.  

\begin{proposition}
    The following sequence is exact:
    \begin{align}\label{def:Ilib}
	K_1(R) \xlongrightarrow{\nu} V(R) \xlongrightarrow{\delta} I^{lib}(R) \longrightarrow 0
\end{align}
\end{proposition}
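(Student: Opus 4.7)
The plan is to check exactness at each of the three positions. Surjectivity of $\delta$ is immediate from the definition: any class in $I^{lib}(R)=\ker(\dim)$ can be written as $[\L;q_0]-[\L';q_0']$ with $\dim\L=\dim\L'$, and after fixing any isomorphism $\L\cong\L'$ it equals $\delta([\L;q_0,q_0'])$. Vanishing of the composition $\delta\circ\nu$ is equally easy: for $a\in\gl(\L;R)$, the automorphism $a$ is itself an isomorphism of hermitian pairs $(\L;a^{*}\sc q\sc a)\xrightarrow{\sim}(\L;q)$, so $[\L;q]=[\L;a^{*}\sc q\sc a]$ in $GW^{+lib}(R)$ and hence $\delta(\nu(a))=0$. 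Independence from the choice of base pair is already covered by Lemma \ref{lemma:K_1actsonV}.

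The real content is in showing $\ker\delta\subseteq\im\nu$. Given $[\L;q_0,q_1]\in\ker\delta$, we have $[\L;q_0]=[\L;q_1]$ in $GW^{+lib}(R)$. By the construction of the Grothendieck completion, this provides a free module $\L''$ with non-degenerate form $q''$ and an $a\in\gl(\L\oplus\L'';R)$ such that $a^{*}\sc(q_1\oplus q'')\sc a=q_0\oplus q''$. Since $[\L'';q'',q'']=0$ in $V(R)$, we may stabilize to $[\L;q_0,q_1]=[\L\oplus\L'';q_0\oplus q'',q_1\oplus q'']$. Applying Corollary \ref{Cor:Vfactorization} with $a_0=1$ and $a_1=a$ produces
\begin{align*}
[\L\oplus\L'';q_0\oplus q'',a^{*}\sc(q_1\oplus q'')\sc a]=[\L;q_0,q_1]+\nu(\Det(a)),
\end{align*}
whose left-hand side, after substitution of the identity $a^{*}\sc(q_1\oplus q'')\sc a=q_0\oplus q''$, collapses to $[\L\oplus\L'';q_0\oplus q'',q_0\oplus q'']=0$. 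Therefore $[\L;q_0,q_1]=-\nu(\Det(a))=\nu(\Det(a)^{-1})\in\im\nu$.

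The main obstacle is the last step: one must translate the abstract equality $[\L;q_0]=[\L;q_1]$ in the Grothendieck group into an explicit module-automorphism $a$ intertwining the two forms, so that the gluing relation $[\L;q,q]=0$ combined with the $\Det$-twist of Corollary \ref{Cor:Vfactorization} can present the triple as a single $\nu$-image. The other two spots amount to unpacking definitions.
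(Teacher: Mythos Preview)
Your proof is correct and follows the same approach as the paper's: both identify $\ker\delta$ with triples where $q_0$ and $q_1$ are (stably) isometric and then invoke Corollary~\ref{Cor:Vfactorization} to exhibit such a triple as a $\nu$-image. Your version is in fact more careful than the paper's terse argument, which writes ``$q_0$ and $q_1$ are isometric'' without making the stabilization by $(\L'';q'')$ explicit; you also spell out surjectivity of $\delta$ and the vanishing of $\delta\circ\nu$, which the paper leaves implicit.
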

\begin{proof}
The kernel of $\delta$ consists of classes of triples $[\L;q_0,q_1]$ such that $[\L;q_0]-[\L;q_1]=0\in I^{lib}(R)$ what means that $q_0$ and $q_1$ are isometric, i.e., $q_1=a^*\sc q_0\sc a$ for some $a\in \gl(\L;R)$. Due to Corollary \ref{Cor:Vfactorization}, the action of $\gl(\L;R)$ factors through the homomorphism $\gl(\L;R)\to K_1(R)$.
\end{proof}

The main result of this section is the classification of loops of Lagrangians based at the standard Lagrangian $\L\subset \hm(\L)$ by pairs of non-degenerate $+$hermitian forms on $\L$. Recall that the hyperbolic $\lm$-unitaries stabilize the standard Lagrangian.  As with the group $\umod^{-}(R)$, defined by \eqref{def:UnitariesModElemModHyperb}, we treat the hyperbolic $\lm$-unitaries as trivial. In terms of $V(R)$, this amounts to taking the quotient with respect to the action of $\K_1(R)$, defined in Lemma \ref{lemma:K_1actsonV}, such that the quotient group is the fundamental ideal $I^{lib}$ of the free Grothendieck-Witt group.  In the next section, we shall review the calculation of $\switt^{+lib}(R)$ for multiple Laurent extensions of $\IF_p$ and calculate the corresponding fundamental ideals.

\section{Fundamental ideal}\label{sec:fundideal}

In this section, we analyze the fundamental ideal of the free Witt group of $+$hermitian forms for the ring of Laurent polynomials $\ifpd\coloneqq \IF_p[x_1,x_1^{-1},\dots, x_{\dd},x_{\scaleto{\dd}{5pt}}^{-1}]$ with odd $p$. Note that $I^{lib}(\ifpd)$ is isomorphic to $I(\ifpd)$ due to the Serre-Suslin-Swan theorem. Thus, it suffices to compute $I(\ifpd)$. 

Recall, the Witt group of $+$hermitian forms $\switt^+(\ifpd)$, defined in \eqref{def:Wittgroup}, is the quotient of the Grothedieck-Witt group of $+$hermitian forms on finite-dimensional free $\ifpd$-modules modulo hyperbolic forms. The fundamental ideal $I(\ifpd)$ consists of elements from $\switt^+(\ifpd)$ corresponding to classes of even-dimensional forms. Thus, $I$ fits into the following exact sequence 
\begin{align}
	0 \longrightarrow I(\ifpd) \longrightarrow \switt^+(\ifpd)\longrightarrow \IZ/2 \longrightarrow 0
\end{align}
such that $I(\ifpd)$ is a subgroup of $\switt^+(\ifpd)$ of order two.

 It is instructive to calculate $I(\IF_p)$ explicitly. It is well-known, see, e.g., \cite{milnor2013symmetric}, that the Witt group $\switt^+(\IF_p)$ is a group of order four generated by +hermitian forms on $1$-dimensional modules. Let us denote those forms by $\la 1\ra$ and $\la \theta \ra$, where $\theta \in \IF_p^{\times}/\IF_p^{2\times}$ is a square non-residue. Then, the Witt group is given by the following generators and relations depending on $p$ mod $4$:
 \begin{align}
 \switt^{+}(\IF_p)&\cong \langle \;\la 1\ra ,\la \theta\ra\,|\,2\la 1\ra=2\la \theta\ra=0 \;\rangle \cong \IZ/2\oplus \IZ/2\,,\quad\quad\quad\quad\quad\,\;\; p\equiv 1\;\mbox{mod}\;4\,,\; \nonumber\\
\switt^{+}(\IF_p)&\cong \langle \;\la 1\ra ,\la \theta\ra\,|\,2\la 1\ra=2\la \theta\ra, \la1\ra +\la \theta \ra =0 \;\rangle  \cong \IZ/4\,,\quad\quad\quad\;\;\;\;\; p\equiv 3\;\mbox{mod}\;4\,.\;  \nonumber    
 \end{align}
Thus, $I(\IF_p)\cong \IZ/2$ is generated by $\la 1\ra +\la \theta\ra$ for $p\equiv 1$ mod $4$ and by $2\la \theta\ra  $ for $p\equiv 3$ mod $4$. This can also be seen directly from \eqref{def:Ilib} after noting that $V(\IF_p)\cong \IF^{\times}_p$.

We are interested in the fundamental ideal $I(\IF_p^{\dd})$ for $\dd=1,2,3$, and $4$. The calculation of $\switt^{+}(\ifpd)$ was done by A. Ranicki in \cite{Ra2} and recently reviewed by J. Haah in \cite{Haah2022}. Due to the Serre-Suslin-Swan theorem, Ranicki's $V$- and $U$-theories for $\ifpd$ coincide. Also, theories of $\pm$hermitian forms coincide with theories of $\pm$quadratic forms since we focus on $\ifpd$ with odd $p$. We adopt the notation used by Haah: 
\begin{align*}
    \vtheory_n(\IF_p^{\dd})=\begin{cases}
        \switt^{+}(\ifpd)\,,\quad n\equiv 0\;\mbox{mod}\;4\,,\\
        \umod^{+}(\ifpd)\,,\quad \,\,n\equiv 1\;\mbox{mod}\;4\,,\\
        \switt^{-}(\ifpd)\,,\quad n\equiv 2\;\mbox{mod}\;4\,,\\
        \umod^{-}(\ifpd)\,,\quad\,\, n\equiv 3\;\mbox{mod}\;4\,.
        \end{cases}
\end{align*}
These L-groups satisfy the following recursion relation \cite{Ra2, Haah2022}:
\begin{align}\label{eq:descent}
    \vtheory_n(\IF_p^{\dd})\cong \vtheory_n(\IF_p^{\dd-1})\oplus \vtheory_{n-1}(\IF_p^{\dd-1})\,.
\end{align}
The base of recursion is given by $\vtheory_0(\IF_p)=\switt^{+}(\IF_p)$ and by known isomorphisms:
\begin{align*}
    \vtheory_1(\IF_p)\cong 0\,\quad  \vtheory_2(\IF_p)\cong 0\,\quad\vtheory_3(\IF_p)\cong 0\,.
\end{align*}
We immediately obtain:
\begin{align}\label{eq:4dWittgroup}
\switt^+(\IF_p)\cong \switt^+(\IF_p^1)\cong \switt^+(\IF_p^2)\cong \switt^+(\IF_p^3)\,,\quad \switt^+(\IF_p^4)\cong \switt^+(\IF_p)\oplus \switt^+(\IF_p)\,,
\end{align}
and obvious isomorphisms
 \begin{align}\label{eq:0123fundideals}
I(\IF_p)\cong I(\IF_p^1)\cong I(\IF_p^2)\cong I(\IF_p^3)\,.
\end{align}
In order to calculate $I(\IF_p^4)$, we need a closer look at how  isomorphism \eqref{eq:descent} works. We always have a ring homomorphism of evaluation at $1$, which induces a group homomorphism $\switt^+(\IF_p^4)\to \switt^+(\IF_p^3)$ whose kernel is isomorphic to $\switt^{+}(\IF_p)$. Moreover, one copy of $\switt^{+}(\IF_p)$ is present in $\switt^{+}(\IF_p^\dd)$ for all $\dd$ and it comes from $\dd=0$. The second copy of $\switt^{+}(\IF_p)$ is the image of $\umod^{-}(\IF_p^3)$ under a map $\bass^\uparrow_{3}$:
\begin{align}
    \bass^\uparrow_{3} : \umod^{-}(\IF_p^3) \to  \switt^{+}(\IF_p^4)
\end{align}
defined in Section 5.8 of \cite{Haah2022}. In short, $\bass^\uparrow_{3}$ maps a $\lambda^{-}$-unitary $u\in \hp^{-}(\L;R)$ to a $+$hermitian form on the induced $\IF_p^{4}$-module $\IF_{p}^{4}\otimes_{\IF_{p}^{3}}(\L\oplus \L^*)$. Explicitly, the non-trivial classes of forms in $\switt^{+}(\IF_p^4)$ can be obtained by applying $\bass^\uparrow_{3}$ to the non-trivial classes of $\umod^{-}(\IF_p^3)$ which are provided in Section IV of \cite{Haah2021}. This simple observation leads to a conclusion that the whole image of $\bass^\uparrow_{3}$ consists of classes of even-dimensional hermitian forms and we obtain
\begin{align}\label{4dfundideal}
    I(\IF_p^4)\cong \IZ/2\oplus \switt^{+}(\IF_p)\,.
\end{align}

Note that the first term $\IZ/2$ is also present in the zero-dimensional case, i.e., there is an inclusion $\iota:I(\IF_p)\to I(\IF_p^\dd)$ for $\dd=1,2,3$. Clearly, any loop of Lagrangians over $\IF_p$ defines a loop of Lagrangians over $\ifpd$. In order to get rid of the zero-dimensional loops, we define $ \tilde I(\IF_p^\dd)=\mbox{coker}\,\iota$. A similar reduction of the space of invertible lattice systems is discussed in Section VI of \cite{wen2022flow}. As a result, we obtain that the homotopy classes of loops of free Lagrangians modulo shifts and zero-dimensional loops are given by 
\begin{align}
    \tilde I(\IF_p^\dd)\cong 0\;\;\mbox{for}\;\; \dd=0,1,2,3\,,\quad \mbox{and}\mbox\quad  \tilde I(\IF_p^4)\cong \switt^+(\IF_p)\,.
\end{align}
This concludes our calculation and we arrive at the result announced in Section \ref{intro}.

    \section{Discussion}\label{sec:discussion}
In this paper, we studied homotopy classes of periodic Clifford unitaries through the analysis of stabilizer Hamiltonians they act on. We found that there are non-trivial homotopy classes of periodic Clifford unitary dynamics in four spatial dimensions.  

\textbf{1.} Our calculations heavily relied on the classical results in Hermitian K-theory, in particular, the fundamental theorem by Karoubi \cite{Karoubi1980} and its interpretation by Barges and Lannes \cite{BL}. The fundamental theorem applies only to rings with involution containing an element $\lambda$ such that $\l+\bar \l=2$. In our case, this condition restricts us to prime $p$-dimensional qudits with $p>2$. However, from the physics perspective, we expect that similar results can be obtained for $p=2$. Namely, we expect that the group of homotopy classes of reduced loops of Clifford unitaries for four-dimensional lattices of qubits is isomorphic to the Witt group of symmetric bilinear forms on $\IF_2$-vector spaces. The reason to believe it is true is described in the next paragraph.

\textbf{2.} The results obtained in this letter confirm the general correspondence between $\dd$-dimensional QCA and $(\dd+1)$-dimensional Floquet circuits, also known as the bulk-boundary correspondence, as mentioned in \cite{Haah2022} and analyzed for $\dd=1$ in  \cite{zhang2022bulkboundary}. A heuristic argument towards the bulk-boundary correspondence for QCA is a variant of the famous argument by Kitaev about loops of short-range entangled phases \cite{Kitaev}, reviewed, in particular, in \cite{Shiozaki2022}. Colloquially speaking, we should replace the $\dd$-dimension SRE state in Kitaev's argument with a $\dd$-dimensional QCA to obtain a $(\dd+1)$-dimensional Floquet circuit parametrized by the QCA. In our approach, the bulk-boundary correspondence arises as an incarnation of the fundamental theorem of hermitian K-theory.

\textbf{3.} As we reviewed in the introductory section, the framework of Clifford unitaries admits a formulation in terms of $\lm$-unitaries and is naturally connected to the Hermitian K-theory. Of course, there are QCA beyond Clifford QCA, see examples in \cite{Wilbur2022}, for which the Hermitian K-theory does not immediately apply. A novel approach to the general QCA in terms of invertible subalgebras can be found in \cite{haah2022invertible} and $C^*$-algebraic approach to symmetric QCA is discussed in \cite{jones2023dhr}. One of the possible candidates describing the structure of general QCA is a Hermitian K-theory of dg categories in the sense of \cite{schlichting2017hermitian}.

\bibliographystyle{unsrt}
\bibliography{CliffLoopsAlt}
\end{document}